\documentclass{siamart251216}

\usepackage{amsmath, amsfonts}
\usepackage{mathtools} 
\usepackage{algorithm,algpseudocode}
\usepackage{subcaption}
\usepackage{booktabs}
\usepackage{xspace}
\usepackage{graphicx}
\graphicspath{{Figures/}}

\crefname{appendix}{appendix}{appendices}
\Crefname{appendix}{Appendix}{Appendices}

\newcommand{\psib}{{\boldsymbol{\psi}}}

\newcommand{\omegab}{{\boldsymbol{\omega}}}

\newcommand{\boldf}{{\boldsymbol{f}}}
\newcommand{\ii}{\mathrm{i}} 
\newcommand{\dd}{\mathrm{d}} 
\newcommand{\C}{\mathbb{C}} 
\newcommand{\R}{\mathbb{R}} 
\newcommand{\oCal}{\mathcal{O}} 
\newcommand{\quadrature}{{\mathcal{Q}}}
\newcommand{\oscI}{I} 
\newcommand{\hermiteH}{\mathrm{H}}
\newcommand{\nTimes}{N_\mathrm{T}} 
\newcommand{\nCtrl}{{N_{\rm C}}}
\newcommand{\nHam}{{N_{\rm H}}}
\newcommand{\nFreq}{{N_{\rm F}}}
\newcommand{\schrodinger}{Schr\"odinger\xspace} 
\DeclareMathOperator{\diag}{diag}
\renewcommand{\Re}{\operatorname{Re}} 
\renewcommand{\Im}{\operatorname{Im}}
\newcommand{\tCurr}{{t_{n}}}
\newcommand{\tNext}{{t_{n+1}}}
\newcommand{\tExp}{{t_{n}}}
\newcommand{\tImp}{{t_{n+1}}}
\newcommand{\tExpImp}{{t_{E/I}}} 
\newcommand{\Explicit}{\mathrm{E}}
\newcommand{\Implicit}{\mathrm{I}}
\newcommand{\ExplicitOrImplicit}{*} 
\newcommand{\ExplicitAndImplicit}{{\Explicit/\Implicit}} 
\newcommand{\ctrlFunc}{{c}} 

\newcommand{\dv}[3][]{\frac{d^{#1}#2}{d{#3}^{#1}}}

\newcommand{\ket}[1]{|#1\rangle}

\newsiamremark{remark}{Remark}
\newsiamremark{hypothesis}{Hypothesis}
\crefname{hypothesis}{Hypothesis}{Hypotheses}
\newsiamthm{claim}{Claim}

\ifpdf
  \DeclareGraphicsExtensions{.eps,.pdf,.png,.jpg}
\else
  \DeclareGraphicsExtensions{.eps}
\fi

\ifpdf
\hypersetup{
  pdftitle={Filon Methods for Simulating Highly Oscillatory Controlled Quantum Systems},
  pdfauthor={Spencer Lee, Daniel Appel\"{o}}
}
\fi

\title{
    Filon Methods for Simulating Highly Oscillatory Controlled Quantum Systems
    \thanks{
        Submitted to the editors July 15, 2026.
        \funding{
            This material is based upon work supported by the National Science Foundation (NSF) Graduate Research Fellowship Program under Grant No. 2235783 (S. Lee) and No. DMS-2436319 (D. Appel\"{o}). Part of this research was performed while S. Lee was visiting the Institute for Pure and Applied Mathematics, which is supported by the NSF (Grant Nos. DMS-1925919 and DMS-2422832), and while D. Appel\"{o} was in residence at the Simons Laufer Mathematical Sciences Institute (supported by NSF Grant No. DMS-2424139) in Berkeley, California, during the Fall 2025 semester. D. Appel\"{o} is also supported by the U.S. Department of Energy, Office of Science, Advanced Scientific Computing Research (ASCR), under Award Number DE-SC0025424. Support from Virginia Tech is also acknowledged.
        }
    }
}

\author{
Spencer Lee\thanks{Department of Computational Mathematics, Science, and Engineering \& Department of Physics and Astronomy, Michigan State University, East Lansing, MI, 48824, USA \email{leespen1@msu.edu}.}
\and 
Daniel Appel\"{o}\thanks{Department of Mathematics, Virginia Tech, Blacksburg, VA, 24061, USA \email{appelo@vt.edu}.}
}

\headers{Filon Methods for Oscillatory Quantum Systems}{Spencer Lee and Daniel Appel\"{o}}

\begin{document}
\maketitle

\begin{abstract}
    Fast and accurate classical simulation of quantum systems is a central challenge in the design and control of quantum computers, but the highly oscillatory dynamics of these systems severely limit the efficiency of standard numerical methods. To address this, we adapt Filon quadrature for oscillatory integrals into two numerical methods, called Filon and Controlled Filon, for solving linear systems of ODEs with highly oscillatory solutions. We tailor both methods for efficient implementation in controlled quantum systems, and the Controlled Filon method additionally accounts for the oscillatory structure of the control pulses. We show by numerical experiments that these methods significantly reduce the computational cost of accurately simulating systems of superconducting transmon qubits by decreasing the number of timesteps needed to reach a given level of precision, with only a modest increase in the cost per timestep. For a realistic simulation of the dynamics of a CNOT gate, the Controlled Filon method is the most efficient method tested at every target accuracy, outperforming the best Hermite method by up to $6\times$ and the Hermite method of the same order by up to $500\times$.
\end{abstract}

\begin{keywords}
    Filon, Oscillatory, Numerical Analysis, Quantum Computing, ODEs.
\end{keywords}

\begin{MSCcodes}
    65L05, 65L20, 81-08.
\end{MSCcodes}


\section{Introduction} \label{sec:introduction}
Our ability to realize useful quantum computers depends on our ability to precisely control quantum hardware. Quantum algorithms are composed of quantum gates. For superconducting transmon qubits \cite{Krantz_2019} (and other hardware platforms), gates are implemented by shaping electromagnetic control pulses that drive the desired unitary evolution. Designing these control pulses is a central challenge of quantum optimal control. Analytical pulse shapes exist only for the simplest of gates and hardware models, so in practice the pulse shapes are computationally optimized to minimize an infidelity objective function that measures the accuracy of the implemented gate. The optimizer may compute the objective function (and its gradient) thousands of times, and each time requires numerically simulating the time-dependent \schrodinger equation that governs the quantum system's dynamics, which is the main computational bottleneck of the process. Developing numerical methods that efficiently simulate these dynamics, and thereby evaluate the objective function and its gradient, is a central goal of quantum optimal control \cite{GRAPE1, GRAPE2, CRAB, GRAFS, GOAT, Goerz2022, petersson2022optimal, QUANDARY, petersson2024timeparallel, HOHO_2025}.

The chief obstacle to efficient simulation is the highly oscillatory nature of the dynamics. Transmon transition frequencies are in the GHz range \cite{Krantz_2019}, while multi-qubit gate durations range from tens to hundreds of nanoseconds \cite{ThreeModeTunableCoupler, CrossResonanceGates}, so the system completes hundreds to thousands of oscillations over the course of a single gate. Standard time integrators must take short enough timesteps (on the scale of the shortest period of the solution) to resolve each oscillation \cite{GeometricNumericalIntegration_2006}, which results in a large computational cost. The standard approach to reduce the computational cost is to apply a \emph{rotating frame transformation} (RFT) and make the \emph{rotating wave approximation} (RWA) \cite{Krantz_2019, Rabi1954}. This removes much of the oscillation at the price of increased model approximation error. For coupled transmon qudits (a generalization of the qubit to a system with more than two levels), there is still oscillation even in the rotating frame with the RWA, albeit at a much lower frequency than in the laboratory (lab) frame.

In this work, we adapt Filon quadrature for computing highly oscillatory integrals \cite{IserlesOscillatoryIntegrals2017, filon1930quadrature, IserlesFilonQuadrature_2005} into a timestepping method for linear systems of ODEs, especially those that arise in quantum control. In this approach, the ``natural'' frequencies of the system are deduced from the Hamiltonian and supplied as an ansatz to the method. The resulting scheme becomes \emph{more} accurate as the frequencies \emph{increase}, so that a fixed accuracy can be reached using far fewer timesteps than other methods require. We demonstrate this by comparison with the Hermite (HOHO) method of \cite{HOHO_2025}.

The rest of the paper is structured as follows. In \Cref{sec:problem-description}, we provide a mathematical description of controlled quantum systems, along with a motivating example. In \Cref{sec:numerical-methods}, we review the theory of Filon quadrature for scalar oscillatory integrals (\Cref{sec:scalar-filon-quadrature}), adapt the quadrature into two Filon methods for linear systems of ODEs with highly oscillatory solutions and derive error bounds (\Crefrange{sec:matrix-vector-product-filon}{sec:controlled-filon}), then provide further analysis of the properties and cost of the methods (\Crefrange{sec:prior-work}{sec:quantum-optimal-control}). In \Cref{sec:numerical-experiments}, we perform numerical simulations of a Rabi oscillator and a CNOT gate implementation on a realistic model of two transmon qudits coupled to a resonator bus, comparing the performance of the Filon methods with that of Hermite methods in both the lab and RWA frames. We offer conclusions in \Cref{sec:conclusions}, and provide explicit constructions of the Filon methods in \Cref{app:filon-method-S-formulas,app:controlled-filon-method-S-formulas}.

\section{Problem Description} \label{sec:problem-description}
The dynamics of closed\footnote{The methods described in this paper are also valid for open quantum systems; we restrict ourselves to closed quantum systems only to keep the notation and physics description simple.} quantum systems are governed by \schrodinger's equation:\footnote{We always choose our units so that $\hbar = 1$.}
\begin{equation} \label{eq:schrodinger-equation}
    \frac{d}{dt}\psib(t) = -\ii H(t)\psib(t),\quad 0 \leq t \leq T,\quad \psib(0) = \psib_0 \in \C^N.
\end{equation}
Here, $\psib(t) \in \C^N$ is the \emph{state vector}\footnote{In quantum computing, states are often represented using bra-ket notation, in which case the state $\psib \in \C^N$ corresponds to the state ket $\ket{\psib}$ in the computational basis.}, and the Hermitian matrix $H(t) \in \C^{N \times N}$ is the \emph{Hamiltonian}, which governs the dynamics of the system. Because $H(t)$ is Hermitian, the eigenvalues of $-\ii H(t)$ are purely imaginary, which gives rise to oscillatory behavior of $\psib(t)$.
In the case of controlled quantum systems, the Hamiltonian is decomposed into \emph{drift} and \emph{control} Hamiltonians:
\begin{equation*}
    H(t) = H_d + H_c(t).
\end{equation*}
The drift Hamiltonian governs the ``natural'' dynamics of the system, while the control Hamiltonian governs how the system responds to applied controls, such as microwave pulses. In quantum optimal control, the control Hamiltonian's time dependence typically enters through scalar functions multiplying constant operators \cite{Krantz_2019}:
\begin{equation} \label{eq:scalar-controlled-hamiltonian}
    H(t) = H_d + \ctrlFunc_1(t)H_{c,1} + \cdots + \ctrlFunc_\nCtrl(t)H_{c,\nCtrl}.
\end{equation}
We refer to the scalar functions as \emph{control pulses}.

As an example, consider the Rabi oscillator \cite[\S5]{scully1997quantumoptics} described in the lab and RWA\footnote{``RWA frame'' refers to the system in the rotating frame with the RWA applied.} frames by the Hamiltonians
\begin{equation} \label{eq:rabi-lab-hamiltonian}
    H_\textrm{Lab}(t) = \frac{\omega_0}{2}\sigma_z + E\cos(\omega t)\sigma_x, \quad
    H_\textrm{RWA}(t) = \frac{\Delta}{2}\sigma_z + \frac{E}{2}\sigma_x,
\end{equation}
where $\sigma_z = \bigl[\begin{smallmatrix} 1 & 0 \\ 0 & -1 \end{smallmatrix}\bigr]$ and $\sigma_x = \bigl[\begin{smallmatrix} 0 & 1 \\ 1 & 0 \end{smallmatrix}\bigr]$ are the Pauli-Z and Pauli-X operators, $\omega_0$ is the transition frequency, $E$ is the drive strength, $\omega$ is the drive frequency, and $\Delta \coloneq \omega_0 - \omega$ is the detuning (the RFT is performed using frequency $\omega$). When $E = 0$, the Hamiltonian is time-independent in the lab and RWA frames, and the solution is
\begin{equation} \label{eq:rabi-lab-solution-no-control}
    \psib_{\textrm{Lab}}(t) =
\begin{bmatrix}
e^{\ii \omega_0 t / 2} \psi_a \\
e^{-\ii \omega_0 t / 2}  \psi_b
\end{bmatrix},
\quad \psib_{\textrm{RWA}}(t) =
\begin{bmatrix}
e^{\ii \Delta t / 2} \psi_a \\
e^{-\ii \Delta t / 2}  \psi_b
\end{bmatrix},
\quad \psib(0) \coloneq 
\begin{bmatrix}
    \psi_a \\ \psi_b
\end{bmatrix}.
\end{equation}
If we approximate the solution \eqref{eq:rabi-lab-solution-no-control} numerically, the main challenge is taking enough timesteps to resolve the oscillations of frequency $\pm\omega_0/2$. When $E \neq 0$, no analytic solution is available; if $E \ll \omega_0$, the components of $\psib(t)$ oscillate with frequencies near $\pm\omega_0/2$, so an accurate numerical solution still requires many timesteps.

To demonstrate the utility of Filon-based time integrators, \Cref{fig:rabi-filon-intro} compares the convergence of the Hermite method with that of the Filon and Controlled Filon methods (developed in \Cref{sec:numerical-methods}) when numerically solving \schrodinger's equation for the Rabi oscillator with Hamiltonian \eqref{eq:rabi-lab-hamiltonian}. The numerical experiment is described in more detail in \Cref{sec:rabi-oscillator-experiment}. The Filon and Controlled Filon methods reach the same accuracy as the Hermite methods while using a much larger timestep.

\begin{figure}
    \centering
    \includegraphics[width=\linewidth]{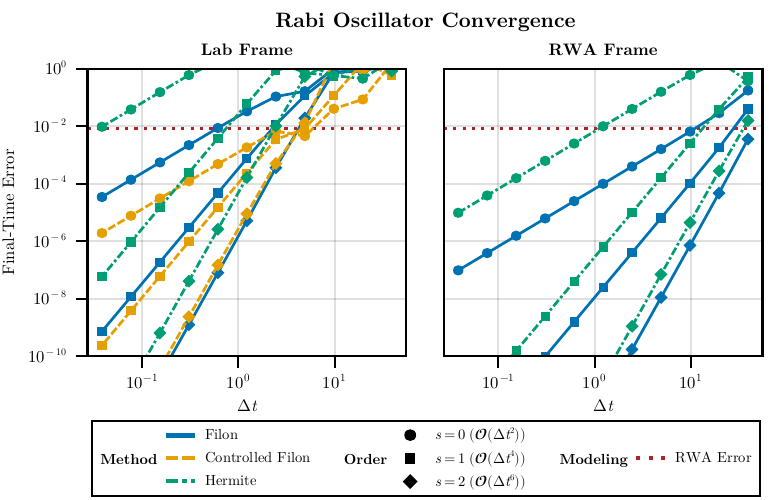}
    \caption{Convergence of the Hermite \cite{HOHO_2025}, Filon (\Cref{sec:filon-for-ODEs}), and Controlled Filon (\Cref{sec:controlled-filon}) methods applied to the Rabi oscillator \eqref{eq:rabi-lab-hamiltonian}. In the RWA frame, the Filon and Controlled Filon methods are identical. The final-time errors are computed using the $\ell^2$ norm.}
    \label{fig:rabi-filon-intro}
\end{figure}

\section{Numerical Methods} \label{sec:numerical-methods}

To approximate the solution to \schrodinger's equation \eqref{eq:schrodinger-equation}, we divide the interval $[0,T]$ into an equidistant grid $t_n = n\Delta t$ for $n=0,\dots,\nTimes$ and $\Delta t \coloneq T / \nTimes$ and let $\psib_n$ denote the numerical approximation to the solution $\psib(t_n)$. To compute $\psib_{n+1}$ from initial data, we put \eqref{eq:schrodinger-equation} in Picard form
\begin{equation} \label{eq:picard-schrodinger}
    \psib(t_{n+1}) = \psib(t_n) + \int_{t_n}^{t_{n+1}}\dv{}{t}\psib(t) \dd t,
\end{equation}
then approximate the integral by quadrature, resulting in a timestepping rule
\begin{equation} \label{eq:S-timestep-rule}
    S_\Implicit \psib_{n+1} = S_\Explicit \psib_n.
\end{equation}
We use the labels $\Implicit$ and $\Explicit$ for the matrices on the left- and right-hand sides of \eqref{eq:S-timestep-rule} because the right-hand side can be computed explicitly, while $\psib_{n+1}$ is computed implicitly by solving a linear system of equations.

In \cite{HOHO_2025}, the integral in \eqref{eq:picard-schrodinger} is approximated using Hermite quadrature (the HOHO method). Our method will be similar, but with the ansatz that the $j$-th component of $\psib(t)$ oscillates at frequency $\omega_j$. That is, if we define
\begin{equation*}
    \omegab \coloneq [\omega_1,\dots,\omega_N]^T, \quad
    R(t) \coloneq e^{-\ii \diag(\omegab) t},
\end{equation*}
then under our highly oscillatory ansatz the solution can be written as
\begin{equation} \label{eq:oscillatory-ansatz}
    \psib(t) \equiv \begin{bmatrix}
    f_1(t)e^{\ii\omega_1 t} \\ \vdots \\ f_N(t)e^{\ii\omega_N t}
    \end{bmatrix}
    \implies \psib(t) = R^\dagger(t)\boldf(t),\quad \boldf(t) \coloneq R(t)\psib(t).
\end{equation}
We call each $e^{\ii \omega_j t}$ and $f_j(t)$ a \emph{carrier wave}\footnote{In general, the carrier wave may take the form $e^{\ii \omega g(t)}$ for some function $g:\R \rightarrow\R$. However, we restrict ourselves to the case of $g(t) = t$, which is sufficient for our application to quantum systems.} and an \emph{envelope}, respectively. We further assume that for $|\omega_j| > 0$, $f_j(t)$ varies more slowly in time than $e^{\ii\omega_j t}$.  

If the components of $\psib(t)$ are highly oscillatory, then the components of the integrand in \eqref{eq:picard-schrodinger} will also be highly oscillatory. Therefore, if we can efficiently evaluate highly oscillatory integrals, then we can efficiently compute solutions to \schrodinger's equation. \Cref{sec:scalar-filon-quadrature} provides an efficient method for scalar integrals, and \Cref{sec:filon-for-ODEs} extends it to the integral in \eqref{eq:picard-schrodinger}.

\subsection{Filon Method for Approximating Integrals} \label{sec:scalar-filon-quadrature}

Consider a highly oscillatory integral of the form
\begin{equation} \label{eq:oscillatory-integral}
\oscI_\omega[f] \coloneq \int_{t_n}^{t_{n+1}} e^{\ii \omega t}f(t) \dd t. 
\end{equation}
We seek an efficient quadrature rule $\quadrature_\omega[f] \approx I_\omega[f]$. Several numerical methods for efficiently computing integrals of this form have been studied in \cite{IserlesOscillatoryIntegrals2017}. Here we use the Filon method from \cite{filon1930quadrature, IserlesFilonQuadrature_2005, IserlesOscillatoryIntegrals2017} that approximates $\oscI_\omega[f]$ by replacing $f(t)$ in the integral \eqref{eq:oscillatory-integral} by a Hermite interpolating polynomial $p_\hermiteH(t)$ whose values \emph{and derivative values} match those of $f(t)$ at multiple nodes in the interval $[t_n,t_{n+1}]$. The resulting quadrature rule is $I_\omega[f] \approx \quadrature_\omega[f] \coloneq I_\omega[p_\hermiteH]$.

We restrict ourselves to the case where the nodes are the endpoints of the interval, and $p_\hermiteH^s(t)$ is the minimal-degree polynomial that matches the values and derivatives of $f(t)$ up to order $s$ at each node.\footnote{A general Hermite interpolating polynomial may match function and derivative values at an arbitrary number of nodes, and with a different number of derivative values matching at each node.} Then $p_\hermiteH^s(t)$ is the unique polynomial of (maximum) degree $2s+1$ satisfying
\begin{equation*}
    \dv[j]{}{t}p_{\hermiteH}^s(t_n) = \dv[j]{}{t}f(t_n),\quad \dv[j]{}{t}p_{\hermiteH}^s(t_{n+1}) = \dv[j]{}{t}f(t_{n+1}),\quad j=0,\dots,s.
\end{equation*}
This Hermite interpolating polynomial $p_\hermiteH^s(t)$ can be constructed using the cardinal polynomials of Hermite interpolation, which are defined by\footnote{Here $\delta_{ij}$ denotes the Kronecker delta, which equals 1 if $i=j$ and equals 0 otherwise.}
\begin{equation*}
    \ell^{(k),s}_{\Explicit,j}(\tExp) = \ell^{(k),s}_{\Implicit,j}(\tImp) = \delta_{jk},\quad
    \ell^{(k),s}_{\Explicit,j}(\tImp) = \ell^{(k),s}_{\Implicit,j}(\tExp) = 0,\quad 0 \leq j,k \leq s.
\end{equation*}
The polynomials $\ell_{\Explicit,j}^s$ and $\ell_{\Implicit,j}^s$ can be found analytically, or their coefficients can be computed by solving a $(2s+2)\times(2s+2)$ linear system. Then $p_\hermiteH^s(t)$ is given by
\begin{equation} \label{eq:hermite-polynomial}
    p_\hermiteH^s(t) \equiv \sum_{j=0}^s \ell^s_{\Explicit,j}(t)f^{(j)}(\tExp) + \ell^s_{\Implicit,j}(t)f^{(j)}(\tImp).
\end{equation}
Replacing $f(t)$ by $p_\hermiteH^s(t)$ in \eqref{eq:oscillatory-integral}, we obtain our quadrature rule $\quadrature_\omega^s[f] \coloneq I_\omega[p_\hermiteH^s]$:
\begin{equation*}
    \quadrature^s_\omega[f] = \sum_{j=0}^s b^s_{\Explicit,j}(\omega) f^{(j)}(\tExp) + b^s_{\Implicit,j}(\omega) f^{(j)}(\tImp),\quad
    b^s_{\ExplicitOrImplicit,j}(\omega) \coloneq\int_{\tCurr}^{\tNext}\ell^s_{\ExplicitOrImplicit,j}(t)e^{\ii\omega t}\dd t.
\end{equation*}

Once the coefficients of each $\ell^s_{*,j}(t)$ are known, the weights $b^s_{*,j}(\omega)$ can be computed in terms of the moments $\mu_m(\omega) \coloneq \int_\tCurr^\tNext t^m e^{\ii \omega t}\dd t$. Computing the moments analytically by integration by parts is unstable for $|\omega| \ll 1$, due to catastrophic cancellation; in that case we instead Taylor expand $e^{\ii\omega t}$ to as many terms as needed and integrate the resulting polynomial.

Furthermore, we can rescale the weights $b^s_{*,j}(\omega)$ so that one set of weights can be accurately and stably computed and then reused for every interval $[\tCurr,\tNext]$. Let $b_{*,j}^{[-1,1],s}(\omega) \coloneq \int_{-1}^{1}\ell^s_{*,j}(t)e^{\ii \omega t}\dd t$. Then the quadrature rule becomes
\begin{multline} \label{eq:scalar-filon-quadrature}
    \quadrature_\omega^s[f] = \sum_{j=0}^s \tilde{b}_{\Explicit,j}^s(\omega) f^{(j)}(\tExp) + \tilde{b}_{\Implicit,j}^s(\omega) f^{(j)}(\tImp),\\
    \tilde{b}_{\ExplicitOrImplicit,j}^s(\omega) \coloneq \left(\frac{\Delta t}{2}\right)^{j+1}e^{\ii\omega(\tCurr+\tNext)/2} \; b_{*,j}^{[-1,1],s} (\hat\omega),\quad \hat\omega \coloneq \omega\frac{\Delta t}{2}.
\end{multline}

\begin{theorem}[Accuracy of Filon Integral Quadrature] \label{thm:scalar-filon-accuracy}
Let $\omega \in \R$ and $f \in C^{2s+2}[t_n,t_{n+1}]$. Then there is a constant $C_s$
depending only on $s$ such that the error in the quadrature rule $\quadrature^s_\omega[f]$ satisfies
\begin{equation} \label{eq:scalar-filon-convergence}
    | I_\omega[f] - \quadrature_\omega^s[f] |
    \leq C_s\,\|f^{(2s+2)}\|_{L^\infty}\;\Delta t^{s+1}
    \min\big(\Delta t^{s+2},\,|\omega|^{-s-2}\big),
\end{equation}
where $\|f^{(2s+2)}\|_{L^\infty}$ is the max norm over $[t_n,t_{n+1}]$.
\end{theorem}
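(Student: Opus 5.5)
Write $e(t) \coloneq f(t) - p_\hermiteH^s(t)$. Then the quadrature error is exactly $I_\omega[f] - \quadrature^s_\omega[f] = I_\omega[e] = \int_{\tCurr}^{\tNext} e^{\ii\omega t} e(t)\,\dd t$. Both branches of the minimum in \eqref{eq:scalar-filon-convergence} will come from this single representation. The plan rests on two properties of $e$. The first is that it vanishes to order $s$ at both endpoints: $e^{(j)}(\tCurr) = e^{(j)}(\tNext) = 0$ for $j=0,\dots,s$. The second is the standard Hermite interpolation error formula $e(t) = \frac{f^{(2s+2)}(\xi(t))}{(2s+2)!}(t-\tCurr)^{s+1}(t-\tNext)^{s+1}$.

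\emph{Non-oscillatory regime.} Bound $|e^{\ii\omega t}|\le 1$ and use the error formula directly. Since $|(t-\tCurr)(t-\tNext)|\le \Delta t^2/4$, integrating over an interval of length $\Delta t$ gives
\[
|I_\omega[e]| \le \frac{\|f^{(2s+2)}\|_{L^\infty}}{(2s+2)!}\,4^{-(s+1)}\,\Delta t^{2s+3}.
\]
This is $\Delta t^{s+1}\cdot\Delta t^{s+2}$, as required.

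\emph{Oscillatory regime.} Integrate by parts $s+1$ times, each time integrating $e^{\ii\omega t}$ and differentiating $e$. All boundary terms contain $e^{(j)}$ at the endpoints with $j\le s$, so they vanish, leaving
\[
I_\omega[e] = \left(\frac{-1}{\ii\omega}\right)^{s+1}\int_{\tCurr}^{\tNext} e^{\ii\omega t} e^{(s+1)}(t)\,\dd t .
\]
Now $e^{(s+1)}$ need not vanish at the endpoints, so one more integration by parts gains a further factor $|\omega|^{-1}$:
\[
|I_\omega[e]| \le |\omega|^{-s-2}\Big(|e^{(s+1)}(\tCurr)| + |e^{(s+1)}(\tNext)| + \int_{\tCurr}^{\tNext}|e^{(s+2)}|\,\dd t\Big).
\]
It then suffices to show $\|e^{(k)}\|_{L^\infty}\le C\,\|f^{(2s+2)}\|_{L^\infty}\Delta t^{2s+2-k}$ for $k=s+1,\,s+2$. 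This yields a bound of order $|\omega|^{-s-2}\Delta t^{s+1}$ in both cases: the $k=s+1$ boundary terms contribute $\Delta t^{s+1}$ directly, and the $k=s+2$ integral contributes $\Delta t\cdot\Delta t^{s}$.

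\emph{Main obstacle: the derivative estimates for $e$.} Differentiating the error formula is unsafe because $\xi(t)$ need not be smooth. Instead I would rescale: set $t=\tCurr+\Delta t\,\tau$ with $\tau\in[0,1]$ and $g(\tau)=f(t)$. Hermite interpolation commutes with affine maps, so the rescaled error is $E = g - \Pi g$, where $\Pi$ is the fixed Hermite interpolation operator on $[0,1]$. This operator is bounded from $C^{2s+2}[0,1]$ to $C^{2s+2}[0,1]$ and reproduces polynomials of degree $\le 2s+1$. Applying Taylor's theorem, i.e.\ a Bramble--Hilbert style argument, by subtracting the degree-$(2s+1)$ Taylor polynomial of $g$ gives $\|E^{(k)}\|_\infty \le C_{s,k}\|g^{(2s+2)}\|_\infty$ for $0\le k\le 2s+2$. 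Equivalently, $E^{(k)}$ vanishes at enough points by Rolle's theorem, and $E^{(2s+2)}=g^{(2s+2)}$, so repeated integration works as well. Since $g^{(2s+2)}=\Delta t^{2s+2}f^{(2s+2)}$ and $\frac{d^k}{dt^k}=\Delta t^{-k}\frac{d^k}{d\tau^k}$, the claimed scaling follows with constants depending only on $s$.

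Taking $C_s$ to be the larger of the constants from the two regimes and combining the bounds as a minimum gives \eqref{eq:scalar-filon-convergence}.
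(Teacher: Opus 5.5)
Your proposal is correct and follows essentially the same route as the paper: the $\Delta t^{2s+3}$ bound comes from the pointwise Hermite interpolation error, and the $|\omega|^{-s-2}\Delta t^{s+1}$ bound comes from $s+2$ integrations by parts in which the first $s+1$ boundary terms vanish, with the endpoint values of $e^{(s+1)}$ and the integral of $e^{(s+2)}$ controlled by derivative estimates on $e$. The differences are minor: the paper cites Birkhoff's derivative bounds for $e^{(s+1)}$ at the endpoints and uses your Rolle iteration ($\|e^{(j)}\|_{L^\infty}\le \Delta t\,\|e^{(j+1)}\|_{L^\infty}$) for $e^{(s+2)}$, whereas you derive both estimates self-containedly via rescaling. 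You also get a slightly sharper non-oscillatory constant, $4^{-(s+1)}/(2s+2)!$, from $|(t-t_n)(t-t_{n+1})|\le \Delta t^2/4$.
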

\begin{proof}
\emph{Error in $\Delta t$.} The Hermite interpolation error $e_\hermiteH(t) \coloneq f(t) - p^s_\hermiteH(t)$ and its derivatives are bounded \cite{HermiteInterpolationErrorBirkhoff} by
\begin{equation} \label{eq:hermite-error-bound}
     |e^{(k)}_\hermiteH(t)| \leq M_k \Delta t^{2s+2-k}[u(t)(1-u(t))]^{s+1-k}, \quad k=0,\dots,s+1,
\end{equation}
where $u(t) \coloneq (t-t_n)/\Delta t$ and $M_k \coloneq \|f^{(2s+2)}\|_{L^\infty} / (k!(2s+2-2k)!)$. Because $0 \leq u(1-u) \leq 1$, we have $|e_\hermiteH(t)| \leq M_0 \Delta t^{2s+2}$. Furthermore, because $|e^{\ii \omega t}| = 1$, we have 
\begin{equation} \label{eq:scalar-filon-timestep-bound}
| I_\omega[f] - \quadrature_\omega^s[f] |
= \left|\int_{t_{n}}^{t_{n+1}}e^{\ii\omega t}e_\hermiteH(t)\dd t\right| \leq \int_{t_n}^{t_{n+1}} |e_\hermiteH(t)|\dd t \leq  M_0\Delta t^{2s+3}.
\end{equation}

\emph{Error in $\omega$.} For any $g \in  C^{2s+2}[t_n,t_{n+1}] \subset C^{s+2}[t_n,t_{n+1}]$, $I_\omega[g]$ can be integrated by parts $s+2$ times to give
\begin{equation*}
    I_\omega[g] = 
    -\sum_{k=0}^{s+1} \frac{1}{(-\ii \omega)^{k+1}}
        \left[ g^{(k)}(t_{n+1})e^{\ii\omega t_{n+1}} - g^{(k)}(t_{n})e^{\ii\omega t_{n}}\right]
    + \frac{1}{(-\ii\omega)^{s+2}}I_\omega[g^{(s+2)}].
\end{equation*}
Taking $g = e_\hermiteH = f - p^s_\hermiteH$, 
by the definition of $p^s_\hermiteH$ we have $e_\hermiteH^{(k)}(t_n) = e_\hermiteH^{(k)}(t_{n+1}) = 0$ for $k = 0,\dots,s$. Then the first $s+1$ terms in the finite series are zero, so that
\begin{equation*}
    I_\omega[e_\hermiteH] = \frac{-1}{(-\ii\omega)^{s+2}}[e_\hermiteH^{(s+1)}(t_{n+1})e^{\ii\omega t_{n+1}} - e_\hermiteH^{(s+1)}(t_n)e^{\ii \omega t_n}] + \frac{1}{(-\ii\omega)^{s+2}}I_\omega[e_\hermiteH^{(s+2)}].
\end{equation*}
Taking the absolute value, we get
\begin{equation} \label{eq:filon-frequency-error-bound}
    |I_\omega[e_\hermiteH]| 
    \leq \frac{1}{|\omega|^{s+2}}\left(|e_\hermiteH^{(s+1)}(t_{n+1})| + |e_\hermiteH^{(s+1)}(t_n)| + \int_{t_n}^{t_{n+1}}|e_\hermiteH^{(s+2)}(t)|\dd t\right).
\end{equation}
By \eqref{eq:hermite-error-bound} we have $|e_\hermiteH^{(s+1)}(t_{n})|,\; |e_\hermiteH^{(s+1)}(t_{n+1})| \leq M_{s+1}\Delta t^{s+1}$. Because $e_\hermiteH$ has at least $2s+2$ roots (counting multiplicity) on $[t_n, t_{n+1}]$, Rolle's theorem can be used to show that $e_\hermiteH^{(j)}$ has at least one root $\xi_j \in [t_n, t_{n+1}]$ for $j = 0,\dots,2s+1$. Then 
\begin{equation*}
e_\hermiteH^{(j)}(t) = \int_{\xi_j}^t e_\hermiteH^{(j+1)}(\tau) \dd \tau \implies \|e_\hermiteH^{(j)}\|_{L^\infty} \leq \Delta t \| e_\hermiteH^{(j+1)} \|_{L^\infty}, \quad j=0,\dots, 2s+1.
\end{equation*}
Because $p_\hermiteH^s$ has degree $\leq 2s+1$, we have $e_\hermiteH^{(2s+2)} \equiv f^{(2s+2)}$. Then the above equation implies $\|e_\hermiteH^{(s+2)}\|_{L^\infty} \leq  \Delta t^s \|f^{(2s+2)} \|_{L^\infty}$. Substituting the three bounds into \eqref{eq:filon-frequency-error-bound} gives
\begin{equation*}
    |I_\omega[f] - \quadrature_\omega^s[f]| = |I_\omega[e_\hermiteH]| \leq \left(2M_{s+1} + \|f^{(2s+2)}\|_{L^\infty}\right)\Delta t^{s+1}{|\omega|^{-s-2}}.
\end{equation*}

\emph{Combined Error.} Combining \eqref{eq:scalar-filon-timestep-bound} and
\eqref{eq:filon-frequency-error-bound}, we get 
\begin{equation*}
    |I_\omega[f] - \quadrature_\omega^s[f]|
    \leq \|f^{(2s+2)}\|_{L^\infty}\,\Delta t^{s+1}\min\left(\tfrac{1}{(2s+2)!}\Delta t^{s+2},\;
    \big(1+\tfrac{2}{(s+1)!}\big)|\omega|^{-s-2}\right),
\end{equation*}
which gives \eqref{eq:scalar-filon-convergence} with $C_s \coloneq 1 + 2/(s+1)!$, completing the proof.
\end{proof}

Remarkably, the error \emph{decreases} as the frequency $\omega$ \emph{increases}. Consequently, Filon quadrature can accurately approximate $I_\omega[f]$ for $|\omega| \gg 1$ with far fewer evaluations of $f(t)$ (and its derivatives) than standard quadrature rules use. In the limit $\omega \rightarrow 0$, the number of evaluations required is comparable to that of a standard rule. The method is exact when $f(t)$ is a polynomial of degree $2s+1$ or lower, since in that case $p^s_\hermiteH(t) \equiv f(t)$.

\subsection{Filon Method for Matrix-Vector-Product Integrals} \label{sec:matrix-vector-product-filon}
Now we adapt the quadrature rule $\quadrature_\omega[f]$ to the problem of approximating the integral in \eqref{eq:picard-schrodinger}. For notational convenience, we define the skew-Hermitian matrix $A(t) \coloneq -\ii H(t)$. The skew-Hermitian property of $A(t)$ is not used in this section's derivation, so the method applies, in principle, to any linear system of ODEs (though not all such systems have highly oscillatory dynamics).

In terms of $A(t)$, \schrodinger's equation can be rewritten as
\begin{equation} \label{eq:linear-ode}
    \dv{}{t}\psib(t) = A(t)\psib(t),\quad 0 \leq t \leq T,\quad \psib(0)=\psib_0 ,
\end{equation}
and the integral in the Picard form of \eqref{eq:picard-schrodinger} can be rewritten as
\begin{equation}  \label{eq:picard-integral}
    \int_\tCurr^\tNext \dv{}{t}\psib(t)\dd t = \int_\tCurr^\tNext -\ii H(t)\psib(t)\dd t = \int_\tCurr^\tNext A(t)\psib(t)\dd t \eqcolon I[A,\psib].
\end{equation}
To approximate the matrix-vector-product integral $I[A,\psib]$, define the quadrature rule
\begin{equation} \label{eq:matvec-quadrature-rule} 
    \quadrature_\omegab^s[A,\psib] \coloneq 
    \sum_{j=0}^s \dv[j]{}{t}\bigg(\!A(t) B_{\Explicit,j}^s(\omegab)\boldf(t)\!\bigg)\bigg|_{\tExp} 
    \!\!+ \sum_{j=0}^s \dv[j]{}{t}\bigg(\!A(t) B_{\Implicit,j}^s(\omegab)\boldf(t)\!\bigg)\bigg|_{\tImp}\!,
\end{equation}
where $B^s_{\Explicit,j}(\omegab)$ and $B^s_{\Implicit,j}(\omegab)$ are diagonal weight matrices, defined in terms of the rescaled weights $\tilde{b}^s_{\ExplicitOrImplicit,j}(\omega)$ of \eqref{eq:scalar-filon-quadrature} by
\begin{equation} \label{eq:diagonal-weight-matrices}
B^s_{\ExplicitOrImplicit,j}(\omegab) 
\coloneq \diag( \tilde{b}^s_{\ExplicitOrImplicit,j}(\omega_1), \dots, \tilde{b}^s_{\ExplicitOrImplicit,j}(\omega_N)).
\end{equation}

\begin{theorem}[Filon Matrix-Vector-Product Quadrature] \label{thm:filon-picard-quadrature}
Let $\omegab \in \R^N$, $\boldf(t) \coloneq R(t)\psib(t)$ as in \eqref{eq:oscillatory-ansatz}, and $a_{mk}f_k \in C^{2s+2}[t_n,t_{n+1}]$ for $1 \leq m, k \leq N$. Then the error in the quadrature rule satisfies\footnote{Here $\|\cdot\|_\infty$ denotes the \emph{vector max norm}, $\|\boldsymbol{x}\|_\infty \coloneq \max_i |x_i|$, not the function $L^\infty$ norm $\|\cdot\|_{L^\infty}$.}
\begin{multline} \label{eq:matvec-quadrature-error}
    \| I[A,\psib] - \quadrature^s_\omegab[A,\psib] \|_\infty \\
    \leq C_s\,\Delta t^{s+1} 
    \sum_{k=1}^N \max_{1 \leq m \leq N} \big\|(a_{mk}f_k)^{(2s+2)}\big\|_{L^\infty}
    \min\big(\Delta t^{s+2},\,|\omega_k|^{-s-2}\big),
\end{multline}
with $C_s \coloneq 1 + 2/(s+1)!$ as in \Cref{thm:scalar-filon-accuracy}.
\end{theorem}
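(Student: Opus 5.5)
The plan is to reduce the vector statement to a collection of scalar Filon integrals and then apply \Cref{thm:scalar-filon-accuracy} to each one. Write $\psib(t) = R^\dagger(t)\boldf(t)$, so that the $m$-th component of the integrand is
\begin{equation*}
    \big(A(t)\psib(t)\big)_m = \sum_{k=1}^N a_{mk}(t) f_k(t) e^{\ii\omega_k t}.
\end{equation*}
Hence
\begin{equation*}
    \big(I[A,\psib]\big)_m = \sum_{k=1}^N I_{\omega_k}[g_{mk}],\qquad g_{mk} \coloneq a_{mk}f_k .
\end{equation*}
Each summand is exactly a scalar oscillatory integral of the form \eqref{eq:oscillatory-integral}, with envelope $g_{mk}$ and frequency $\omega_k$.

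The second step is to check that the $m$-th component of $\quadrature^s_\omegab[A,\psib]$ equals $\sum_k \quadrature^s_{\omega_k}[g_{mk}]$. Since $B^s_{*,j}(\omegab)$ is diagonal and constant in $t$, we have
\begin{equation*}
    \big(A(t)B^s_{*,j}(\omegab)\boldf(t)\big)_m = \sum_k \tilde b^s_{*,j}(\omega_k)\, a_{mk}(t) f_k(t).
\end{equation*}
Differentiating $j$ times and evaluating at $\tExp$ or $\tImp$ gives $\sum_k \tilde b^s_{*,j}(\omega_k)\, g_{mk}^{(j)}(t_{\Explicit/\Implicit})$. Summing over $j$ and over the two endpoints reproduces, term by term in $k$, the scalar rule \eqref{eq:scalar-filon-quadrature} applied to $g_{mk}$ with frequency $\omega_k$. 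This bookkeeping, namely that the weight depends only on the column index $k$ and that the derivative lands on the product $a_{mk}f_k$ rather than on $f_k$ alone, is the only point needing care. It is also why the hypothesis is stated as $a_{mk}f_k \in C^{2s+2}$.

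With both identifications in hand, the triangle inequality gives
\begin{equation*}
    \big|\big(I[A,\psib]-\quadrature^s_\omegab[A,\psib]\big)_m\big| \le \sum_k \big|I_{\omega_k}[g_{mk}]-\quadrature^s_{\omega_k}[g_{mk}]\big|.
\end{equation*}
By \Cref{thm:scalar-filon-accuracy}, each term is at most
\begin{equation*}
    C_s\,\|g_{mk}^{(2s+2)}\|_{L^\infty}\,\Delta t^{s+1}\min\big(\Delta t^{s+2},|\omega_k|^{-s-2}\big),
\end{equation*}
with the same $C_s = 1+2/(s+1)!$. Bound $\|g_{mk}^{(2s+2)}\|_{L^\infty}$ by its maximum over $m$, which makes the right-hand side independent of $m$. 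Taking the maximum over $m$ on the left then yields \eqref{eq:matvec-quadrature-error}.

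No step is a serious obstacle. The main thing to get right is the componentwise identification of the quadrature rule in the second step, since everything else follows from the scalar theorem.
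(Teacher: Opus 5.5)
Your proposal is correct and follows essentially the same route as the paper: write each component of $I[A,\psib]$ as $\sum_k I_{\omega_k}[a_{mk}f_k]$, check that the $m$-th component of $\quadrature^s_\omegab[A,\psib]$ is $\sum_k \quadrature^s_{\omega_k}[a_{mk}f_k]$, apply \Cref{thm:scalar-filon-accuracy} to each term, and take the maximum over $m$. Your explicit remark that the diagonal weights depend only on the column index $k$, and that the derivatives fall on the product $a_{mk}f_k$, makes precise the identification step that the paper leaves to the reader to verify.
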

\begin{proof}
Applying the integral to each component of \eqref{eq:picard-integral}, we get
\begin{equation} \label{eq:picard-integral-expanded}
    I[A,\psib] \equiv \int_\tCurr^\tNext A(t)\psib(t)\dd t = 
    \sum_{k=1}^N \begin{bmatrix}
        \int_\tCurr^\tNext a_{1k}(t)\psi_k(t) \dd t \\
        \vdots \\
        \int_\tCurr^\tNext a_{Nk}(t)\psi_k(t) \dd t \\
    \end{bmatrix}
    .
\end{equation}
By using the highly oscillatory ansatz \eqref{eq:oscillatory-ansatz}, we see that each integral in \eqref{eq:picard-integral-expanded} is an oscillatory integral of the form \eqref{eq:oscillatory-integral}:
\begin{equation*} 
    I[A,\psib] \equiv \int_\tCurr^\tNext A(t)\psib(t)\dd t = 
    \sum_{k=1}^N \begin{bmatrix}
        \int_\tCurr^\tNext e^{\ii\omega_k t}a_{1k}(t)f_k(t) \dd t \\
        \vdots \\
        \int_\tCurr^\tNext e^{\ii\omega_k t}a_{Nk}(t)f_k(t) \dd t \\
    \end{bmatrix} =
    \sum_{k=1}^N \begin{bmatrix}
         I_{\omega_k}[a_{1k}f_k] \\
        \vdots \\
        I_{\omega_k}[a_{Nk}f_k] \\
    \end{bmatrix}.
\end{equation*}
Approximate each $I_{\omega_k}[a_{mk}f_k]$ with the scalar quadrature $\quadrature^s_{\omega_k}[a_{mk}f_k]$. Then the $m$-th component of $I[A,\psib]$ is approximated by
\begin{align*}
    (I[A,\psib])_m &\approx \sum_{k=1}^N\quadrature^s_{\omega_k}[a_{mk}f_k]  \\
    &= \sum_{k=1}^N \sum_{j=0}^s \left(\!
        \tilde{b}^s_{\Explicit,j}(\omega_k) \dv[j]{}{t}\Big(\!a_{mk}(t)f_k(t)\!\Big)\bigg|_\tExp 
        \!\!\!+ \tilde{b}^s_{\Implicit,j}(\omega_k) \dv[j]{}{t}\Big(\!a_{mk}(t)f_k(t)\!\Big)\bigg|_\tImp
    \right) \\
    &= \sum_{j=0}^s \dv[j]{}{t}\bigg(A(t) B^s_{\Explicit,j}(\omegab)\boldf(t)\bigg)_{\!m}\bigg|_{\tExp} 
        + \sum_{j=0}^s \dv[j]{}{t}\bigg(A(t) B^s_{\Implicit,j}(\omegab)\boldf(t)\bigg)_{\!m}\bigg|_{\tImp}.
\end{align*}
Comparing with the components of \eqref{eq:matvec-quadrature-rule}, we can verify that this approximation of $I[A,\psib]$ is exactly the matrix-vector-product quadrature $\quadrature^s_\omegab[A,\psib]$. Therefore, the error in each component of $\quadrature^s_\omegab[A,\psib]$ is the sum of the errors of $N$ scalar Filon quadratures, which are bounded by \Cref{thm:scalar-filon-accuracy}:
\begin{equation*}
    | (I[A,\psib] - \quadrature_\omegab^s[A,\psib])_m |
    \leq \sum_{k=1}^N C_s\,\|(a_{mk}f_k)^{(2s+2)}\|_{L^\infty}\;\Delta t^{s+1}
    \min\big(\Delta t^{s+2},\,|\omega_k|^{-s-2}\big).
\end{equation*}
Taking the maximum over $m$ gives \eqref{eq:matvec-quadrature-error}, completing the proof.
\end{proof}
\begin{corollary}[Simpler Error Bound of the Matrix-Vector-Product Quadrature] \label{cor:matvec-quadrature-simple-error}
Let $\omega_{\min} \coloneq \min_k |\omega_k|$ and $g^j_{\max} \coloneq \max_{m,k} \|(a_{mk}f_k)^{(j)}\|_{L^\infty}$. Then \eqref{eq:matvec-quadrature-error} implies the simpler bound
\begin{equation*}
    \| I[A,\psib] - \quadrature^s_\omegab[A,\psib] \|_\infty
    \leq N C_s  g^{2s+2}_{\max}\, \Delta t^{s+1} \min\big(\Delta t^{s+2},\,\omega_{\min}^{-s-2}\big).
\end{equation*}
\end{corollary}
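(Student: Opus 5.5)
The plan is to bound the right-hand side of \eqref{eq:matvec-quadrature-error} term by term, using two elementary monotonicity facts. The first concerns the derivative factor. For each fixed $k$, we have
\begin{equation*}
\max_{1\le m\le N}\|(a_{mk}f_k)^{(2s+2)}\|_{L^\infty} \le \max_{m,k}\|(a_{mk}f_k)^{(2s+2)}\|_{L^\infty} = g^{2s+2}_{\max},
\end{equation*}
since the maximum over a larger index set can only be larger.

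The second fact concerns the frequency factor. The function $x\mapsto x^{-s-2}$ is nonincreasing on $(0,\infty)$. Since $|\omega_k|\ge\omega_{\min}$ for every $k$, we get $|\omega_k|^{-s-2}\le \omega_{\min}^{-s-2}$. The map $y\mapsto\min(\Delta t^{s+2},y)$ is nondecreasing, so
\begin{equation*}
\min\big(\Delta t^{s+2},|\omega_k|^{-s-2}\big)\le\min\big(\Delta t^{s+2},\omega_{\min}^{-s-2}\big).
\end{equation*}
The only point needing care is zero frequencies. If some $\omega_k=0$ or $\omega_{\min}=0$, we interpret $0^{-s-2}=+\infty$, so the minimum equals $\Delta t^{s+2}$ and the inequality still holds. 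This is the same convention already implicit in \Cref{thm:scalar-filon-accuracy}, where only the $\Delta t$ branch of that proof applies when $\omega=0$.

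Both factors are nonnegative, so their bounds multiply. Each of the $N$ summands in \eqref{eq:matvec-quadrature-error} is therefore at most $C_s\,\Delta t^{s+1}g^{2s+2}_{\max}\min(\Delta t^{s+2},\omega_{\min}^{-s-2})$. Summing over $k=1,\dots,N$ contributes the factor $N$ and gives the stated bound.

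There is no real obstacle here. The only subtle step is the zero-frequency convention, which I would state explicitly.
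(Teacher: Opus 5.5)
Your proposal is correct and is exactly the argument the paper leaves implicit when it says the bound ``follows immediately'' from \eqref{eq:matvec-quadrature-error}: you bound each of the $N$ summands by replacing the inner maximum with $g^{2s+2}_{\max}$, and by using monotonicity to replace $|\omega_k|$ with $\omega_{\min}$. Your explicit convention $0^{-s-2}=+\infty$ for zero frequencies is a sensible addition. The paper does not state it, but it is consistent with the $\Delta t$ branch of \Cref{thm:scalar-filon-accuracy}.
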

\begin{proof}
This follows immediately from \eqref{eq:matvec-quadrature-error}.
\end{proof}
\begin{corollary}[Exactness of the Quadrature] \label{cor:exact-quadrature}
Let each $a_{mk}f_k$ be a polynomial of degree at most $2s+1$ (equivalently, $A\boldf$ is a vector-valued polynomial of degree at most $2s+1$). Then the quadrature rule is exact: $\|\quadrature^s_\omegab[A,\psib] - I[A,\psib]\|_\infty = 0$.
\end{corollary}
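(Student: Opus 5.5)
The quickest route is through the error bound of \Cref{thm:filon-picard-quadrature}. If each $a_{mk}f_k$ is a polynomial of degree at most $2s+1$, then it lies in $C^{2s+2}[t_n,t_{n+1}]$, so the theorem's hypotheses hold. Moreover $(a_{mk}f_k)^{(2s+2)} \equiv 0$ identically, so every term $\max_m\|(a_{mk}f_k)^{(2s+2)}\|_{L^\infty}$ on the right-hand side of \eqref{eq:matvec-quadrature-error} vanishes. The whole bound is therefore zero, and hence $\|\quadrature^s_\omegab[A,\psib] - I[A,\psib]\|_\infty = 0$.

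I would also give a direct argument, since it is more transparent and covers the case $\omega_k = 0$, where the factor $\min(\Delta t^{s+2},|\omega_k|^{-s-2})$ simply equals $\Delta t^{s+2}$, so nothing degenerates. The proof of \Cref{thm:filon-picard-quadrature} shows that the $m$-th component of $\quadrature^s_\omegab[A,\psib]$ equals $\sum_k \quadrature^s_{\omega_k}[a_{mk}f_k]$. Each scalar rule is, by construction, $\quadrature^s_{\omega_k}[g] = I_{\omega_k}[p^s_\hermiteH]$, where $p^s_\hermiteH$ is the Hermite interpolant of $g = a_{mk}f_k$. Uniqueness of the degree-$\le 2s+1$ Hermite interpolant matching $s+1$ derivatives at both endpoints gives $p^s_\hermiteH \equiv g$ whenever $g$ is itself such a polynomial. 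So each scalar quadrature is exact, and summing over $k$ gives exactness componentwise.

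The parenthetical equivalence with ``$A\boldf$ is a vector-valued polynomial'' needs some care. The $m$-th component of $A\boldf$ is $\sum_k a_{mk}f_k$, so it is a polynomial whenever each product is. The converse fails in general, because the quadrature weights depend on $k$ through $\omega_k$. I would therefore treat the termwise condition as the operative hypothesis, and read the equivalence as holding when all $\omega_k$ coincide or when interpreted termwise. This interpretive point is the only real obstacle; the exactness itself is immediate.
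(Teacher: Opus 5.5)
Your first argument is the paper's proof: because $(a_{mk}f_k)^{(2s+2)}\equiv 0$, the bound of \Cref{thm:filon-picard-quadrature} vanishes (the paper invokes it via \Cref{cor:matvec-quadrature-simple-error} with $g^{2s+2}_{\max}=0$), and your Hermite-uniqueness argument is the observation the paper records after \Cref{thm:scalar-filon-accuracy}. Your caveat about the parenthetical is also sound: a polynomial $A\boldf$ does not force each $a_{mk}f_k$ to be polynomial, and cancellations among terms with different $\omega_k$ generally destroy exactness; when all $\omega_k$ coincide, the weaker hypothesis does suffice by linearity of the scalar rule, although the two hypotheses are still not equivalent there.
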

\begin{proof}
In this case, $(a_{mk}f_k)^{(2s+2)} \equiv 0$ for all $1 \leq m, k \leq N$. Hence $g^{2s+2}_{\max} = 0$ and \Cref{cor:matvec-quadrature-simple-error} gives $\|\quadrature^s_\omegab[A,\psib] - I[A,\psib]\|_\infty = 0$.
\end{proof}

\subsection{Filon Method for Linear ODEs} \label{sec:filon-for-ODEs}
The solution $\psib(t)$ of the linear system of ODEs \eqref{eq:linear-ode} can be approximated using the timestepping rule
\begin{equation} \label{eq:filon-timestep-rule-quadrature-form}
    \psib_{n+1} \approx \psib_n + \quadrature^s_\omegab[A,\psib].
\end{equation}
Each timestep is performed by solving a system of equations:
\begin{equation} \label{eq:filon-timestep-rule}
    \psib_{n+1} - \sum_{j=0}^s \dv[j]{}{t}\bigg(A(t) B^s_{\Implicit,j}(\omegab)\boldf(t)\bigg)\bigg|_{\tImp} = 
    \psib_{n} + \sum_{j=0}^s \dv[j]{}{t}\bigg(A(t) B^s_{\Explicit,j}(\omegab)\boldf(t)\bigg)\bigg|_{\tExp}.
\end{equation}
We refer to the timestepping rule \eqref{eq:filon-timestep-rule-quadrature-form} as \emph{the Filon method} in the remainder of this work.

\begin{corollary}[Filon Method Local Truncation Error] \label{cor:filon-timestepping}
Let $\omegab \in \R^N$, $\boldf(t) \coloneq R(t)\psib(t)$ as in \eqref{eq:oscillatory-ansatz}, and $a_{mk}f_k \in C^{2s+2}[t_n,t_{n+1}]$ for $1 \leq m, k \leq N$. Then the local truncation error of one timestep of the Filon method \eqref{eq:filon-timestep-rule-quadrature-form} is bounded by 
\begin{multline} \label{eq:filon-ode-local-truncation-error}
   \Big\| \psib(\tNext) - \big( \psib(\tCurr) + \quadrature^s_\omegab[A,\psib] \big) \Big\|_\infty
   \leq N C_s g^{2s+2}_{\max} \Delta t^{s+1} \min(\Delta t^{s+2}, \omega_{\min}^{-s-2}).
\end{multline}
\end{corollary}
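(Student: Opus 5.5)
The local truncation error is defined with the exact solution inserted into one step of \eqref{eq:filon-timestep-rule-quadrature-form}. So I would reduce the statement to the quadrature error bound of \Cref{cor:matvec-quadrature-simple-error}, and never touch the numerical iterates $\psib_n$.

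First, write the Picard form \eqref{eq:picard-schrodinger} in the notation of \eqref{eq:picard-integral}. Since $\psib$ solves \eqref{eq:linear-ode} exactly on $[\tCurr,\tNext]$, the fundamental theorem of calculus gives $\psib(\tNext) = \psib(\tCurr) + I[A,\psib]$, where $I[A,\psib]=\int_\tCurr^\tNext A(t)\psib(t)\dd t$. This only needs $\psib$ to be $C^1$, which follows from the ODE and the assumed continuity of $A$, or more directly from $a_{mk}f_k$ being smooth.

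Second, in the quantity inside the norm of \eqref{eq:filon-ode-local-truncation-error}, the terms $\psib(\tCurr)$ cancel. This leaves
\begin{equation*}
\psib(\tNext) - \big(\psib(\tCurr) + \quadrature^s_\omegab[A,\psib]\big) = I[A,\psib] - \quadrature^s_\omegab[A,\psib].
\end{equation*}
Here $\quadrature^s_\omegab[A,\psib]$ is evaluated on the exact envelope $\boldf(t)=R(t)\psib(t)$ and its derivatives at both endpoints. That matches the hypotheses of \Cref{thm:filon-picard-quadrature} exactly.

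Third, apply \Cref{thm:filon-picard-quadrature} under the stated regularity $a_{mk}f_k\in C^{2s+2}[t_n,t_{n+1}]$. Then pass to the simpler form in \Cref{cor:matvec-quadrature-simple-error}. To do so, bound each $\max_m\|(a_{mk}f_k)^{(2s+2)}\|_{L^\infty}$ by $g^{2s+2}_{\max}$. Also use $\min(\Delta t^{s+2},|\omega_k|^{-s-2})\le\min(\Delta t^{s+2},\omega_{\min}^{-s-2})$, which holds because $|\omega_k|\ge\omega_{\min}$. Summing the $N$ terms then yields the factor $N$, and the claimed bound follows.

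I see no real obstacle. The one point to state carefully is that the truncation error uses exact data in the quadrature: the implicit solve in \eqref{eq:filon-timestep-rule} plays no role. Two degenerate cases also deserve a remark. If $\omega_{\min}=0$, the $\omega$-branch of the minimum is read as $+\infty$ and the $\Delta t^{2s+3}$ branch applies. If the $L^\infty$ norms are not attained, one may take suprema instead.
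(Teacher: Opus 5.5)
Your proposal is correct and follows the paper's proof. The Picard identity $\psib(\tNext)=\psib(\tCurr)+I[A,\psib]$ turns the local truncation error into exactly the quadrature error, which is then bounded by \Cref{thm:filon-picard-quadrature} and \Cref{cor:matvec-quadrature-simple-error}; the only difference is that you also write out the steps bounding each term by $g^{2s+2}_{\max}$ and using $|\omega_k|\ge\omega_{\min}$ inside the minimum, which the paper leaves implicit.
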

\begin{proof}
From the Picard form \eqref{eq:picard-schrodinger}, we have $\psib(\tNext) = \psib(\tCurr) + I[A,\psib]$. Therefore the local truncation error of one timestep is exactly the error in the quadrature $\quadrature^s_\omegab[A,\psib]$, which is bounded by \Cref{thm:filon-picard-quadrature} (and \Cref{cor:matvec-quadrature-simple-error}).
\end{proof}

\begin{remark}[Meaning of asymptotic behavior in $\omegab$] 
Although the frequencies $\omegab$ may be chosen freely, \eqref{eq:filon-ode-local-truncation-error} does not imply that the Filon method's error can be made arbitrarily small by taking $\omega_{\min}$ arbitrarily large. Rather, a poor choice of $\omegab$ will result in a large $g^{2s+2}_{\max}$. Roughly speaking, we have $g^{2s+2}_{\max} = \oCal(\|\delta\omegab\|_\infty^{2s+2})$, where $\delta\omegab$ is the ``mismatch'' between the method's $\omegab$ and the true frequencies of the solution.
\end{remark}

\begin{remark}[Usefulness for controlled systems]
When the drift Hamiltonian is diagonal and no controls are applied ($H(t) \equiv H_d$), the solution is $\psib(t) = e^{-\ii \diag(H_d) t}\psib_0$, and the Filon method is exact when we take $\omegab = -\diag(H_d)$, where here $\diag(H_d)$ denotes the vector of diagonal entries of $H_d$. If we apply control pulses but limit their magnitude, then $H(t)$ is a small perturbation of $H_d$. In that case, we expect the analytical solution will be close enough to the no-control solution that the Filon method with $\omegab = -\diag(H_d)$ will be very accurate. 
\end{remark}

\begin{remark}[Solving the system of equations] 
    The system of equations \eqref{eq:filon-timestep-rule} is linear in $\psib_{n+1}$ and $\psib_{n}$ after substituting $\boldf(t)=e^{-\ii\diag(\omegab) t}\psib(t)$ and expanding the derivatives $\psib^{(j)}(t)$ in terms of $\psib(t)$ using the original ODE system $d\psib(t)/dt = A(t)\psib(t)$ (and its derivatives). Then the system of equations can be written in the form $S^s_\Implicit\psib_{n+1} = S^s_\Explicit\psib_n$. The system can be solved directly by forming $S^s_\Implicit$ and $S^s_\Explicit$, or in a ``matrix-free'' fashion by computing only the vectors $S^s_\Explicit\psib_{n}$ and several candidate solutions $S^s_\Implicit\psib_{n+1}$ in an iterative solver such as GMRES \cite{GMRES}. Explicit formulas for the matrices $S^s_\Implicit$ and $S^s_\Explicit$ for $s=0,1,2$ are given in \Cref{app:filon-method-S-formulas}. 
\end{remark}

\subsection{Modification for Controlled Quantum Systems} \label{sec:controlled-filon}
Consider a controlled quantum system, i.e., a system with a Hamiltonian of the form \eqref{eq:scalar-controlled-hamiltonian}.
When using quantum optimal control to design control pulse shapes that implement desired quantum gates or state-to-state transfers, it is useful to consider control pulses built from one or more carrier waves multiplying smooth envelopes,
\begin{equation} \label{eq:multicarrier-control-pulse}
    \ctrlFunc_k(t) = \sum_{l=1}^{N_{\mathrm{F},k}} \tilde{\ctrlFunc}_{k,l}(t)\, e^{\ii\nu_{k,l}t}.
\end{equation}
Here the $k$-th control Hamiltonian is driven by $N_{\mathrm{F},k}$ carriers, where the $l$-th carrier has envelope $\tilde{\ctrlFunc}_{k,l}(t)$ and carrier frequency $\nu_{k,l}$. The carrier frequencies can be chosen to trigger resonance, which enables fast growth of resonant states and can reduce the time needed to realize a quantum gate or state-to-state transfer; a single control Hamiltonian is often driven by several carriers at once to address several transitions simultaneously. Consequently, there are two sources of highly oscillatory behavior: the implicit oscillation in $\psib(t)$ caused by the dynamics of the system, and the explicit oscillation in the Hamiltonian $H(t)$.

As we have done previously, we simplify the notation by taking $A(t) = -\ii H(t)$ and identifying the terms of $A(t)$ with those of \eqref{eq:scalar-controlled-hamiltonian} via $A_{k} = -\ii H_{c,k}$ for $k=1,\dots,\nCtrl$. The drift Hamiltonian is treated as an additional control Hamiltonian $A_{\nHam} = -\ii H_d$ carrying a single carrier with unit envelope and zero carrier frequency ($N_{\mathrm{F},\nHam} = 1$, $\tilde{\ctrlFunc}_{\nHam,1} \equiv 1$, $\nu_{\nHam,1} = 0$), where $\nHam \coloneq \nCtrl + 1$ is the total number of Hamiltonians. Then we have
\begin{equation} \label{eq:scalar-controlled-hamiltonian-A}
    A(t) = c_1(t) A_1 + \cdots + c_\nHam(t) A_\nHam,
\end{equation}
and the linear system of ODEs describing the time evolution of the controlled quantum system is
\begin{equation} \label{eq:scalar-controlled-schrodinger}
    \dv{}{t}\psib(t) = \Big( \sum_{k=1}^{\nHam} \ctrlFunc_k(t) A_k \Big) \psib(t)
    = \Big( \sum_{k=1}^{\nHam} \sum_{l=1}^{N_{\mathrm{F},k}} \tilde{\ctrlFunc}_{k,l}(t)\, e^{\ii\nu_{k,l}t} A_k \Big) \psib(t).
\end{equation}

For linear ODE systems of the form \eqref{eq:scalar-controlled-schrodinger}, we propose the timestepping rule:
\begin{equation} \label{eq:controlled-filon-quadrature-timestep}
    \psib_{n+1} \approx \psib_n + \sum_{k=1}^{\nHam} \sum_{l=1}^{N_{\mathrm{F},k}} \quadrature^s_{\tilde{\omegab}_{k,l}}[\tilde{\ctrlFunc}_{k,l} A_k,\psib],
\end{equation}
where $\tilde{\omegab}_{k,l} \coloneq [\omega_1 +\nu_{k,l},\ldots, \omega_N + \nu_{k,l}]^T$ denotes the ansatz frequencies $\omegab$ shifted by the $l$-th carrier frequency of the $k$-th control pulse. Each timestep is performed by solving a linear system of equations:
\begin{multline} \label{eq:controlled-filon-timestep-system-of-equations}
    \psib_{n+1} - \sum_{k=1}^{\nHam} A_k \sum_{l=1}^{N_{\mathrm{F},k}} \sum_{j=0}^s B^s_{\Implicit,j}(\tilde{\omegab}_{k,l}) \dv[j]{}{t}\Big(\tilde{\ctrlFunc}_{k,l}(t)\boldf(t)\Big)\bigg|_{\tImp} \\
    = \psib_{n} + \sum_{k=1}^{\nHam} A_k \sum_{l=1}^{N_{\mathrm{F},k}} \sum_{j=0}^s B^s_{\Explicit,j}(\tilde{\omegab}_{k,l}) \dv[j]{}{t}\Big(\tilde{\ctrlFunc}_{k,l}(t)\boldf(t)\Big)\bigg|_{\tExp} .
\end{multline}
We refer to this method as the \emph{Controlled Filon method}. As with the Filon method, the Controlled Filon method can be written in the form $S^s_\Implicit \psib_{n+1} = S^s_\Explicit \psib_n$, and this linear system of equations can be solved directly or using an iterative solver such as GMRES. Explicit formulas for the Controlled Filon matrices $S^s_\Implicit$ and $S^s_\Explicit$ ($s=0,1,2$) are given in \Cref{app:controlled-filon-method-S-formulas}.

\begin{corollary}[Controlled Filon Method Local Truncation Error] \label{cor:controlled-filon}
Let $\omegab \in \R^N$, $\boldf(t) \coloneq R(t)\psib(t)$ as in \eqref{eq:oscillatory-ansatz}, and $\tilde{\ctrlFunc}_{k,l}f_j \in C^{2s+2}[t_n,t_{n+1}]$ for $1 \leq k \leq \nHam$, $1 \leq l \leq N_{\mathrm{F},k}$, $1 \leq j \leq N$. Then with $C_s$ as in \Cref{thm:scalar-filon-accuracy}, the local truncation error of one timestep of the Controlled Filon method \eqref{eq:controlled-filon-quadrature-timestep} is bounded by
\begin{multline} \label{eq:controlled-filon-local-truncation-error}
    \bigg\| \psib(\tNext) - \Big( \psib(\tCurr) + \sum_{k=1}^{\nHam} \sum_{l=1}^{N_{\mathrm{F},k}} \quadrature^s_{\tilde{\omegab}_{k,l}}[\tilde{\ctrlFunc}_{k,l}A_k,\psib] \Big) \bigg\|_\infty \\
    \leq C_s\,\Delta t^{s+1} \!\sum_{k=1}^{\nHam} \! \sum_{l=1}^{N_{\mathrm{F},k}} \! \sum_{j=1}^N
         \big\|(\tilde{\ctrlFunc}_{k,l} f_j)^{(2s+2)}\big\|_{L^\infty} 
         \! \min\!\big(\Delta t^{s+2},\,|\omega_j+\nu_{k,l}|^{-s-2}\big)\max_{m} \big|(A_k)_{mj}\big|.\!
\end{multline}
\end{corollary}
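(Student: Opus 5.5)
The plan is to reduce the local truncation error to a sum of scalar Filon quadrature errors, as in the proofs of \Cref{thm:filon-picard-quadrature} and \Cref{cor:filon-timestepping}, and then apply \Cref{thm:scalar-filon-accuracy} to each term.

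First, by the Picard form \eqref{eq:picard-schrodinger}, $\psib(\tNext)-\psib(\tCurr) = I[A,\psib]$. The left-hand side of \eqref{eq:controlled-filon-local-truncation-error} is therefore the error of the combined quadrature. Expanding $A(t)$ via \eqref{eq:scalar-controlled-hamiltonian-A} and \eqref{eq:multicarrier-control-pulse}, and substituting the ansatz $\psi_j(t)=e^{\ii\omega_j t}f_j(t)$ from \eqref{eq:oscillatory-ansatz}, the $m$-th component becomes
\begin{equation*}
    (I[A,\psib])_m = \sum_{k=1}^{\nHam}\sum_{l=1}^{N_{\mathrm{F},k}}\sum_{j=1}^N (A_k)_{mj}\int_\tCurr^\tNext e^{\ii(\omega_j+\nu_{k,l})t}\,\tilde{\ctrlFunc}_{k,l}(t)f_j(t)\dd t
    = \sum_{k,l,j} (A_k)_{mj}\, I_{\omega_j+\nu_{k,l}}[\tilde{\ctrlFunc}_{k,l}f_j].
\end{equation*}
Because $A_k$ is constant, it factors out of the integral. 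Each remaining integral is a scalar oscillatory integral of the form \eqref{eq:oscillatory-integral}, with frequency $\omega_j+\nu_{k,l}$ and smooth integrand $\tilde{\ctrlFunc}_{k,l}f_j$.

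Next, I would verify that the $m$-th component of $\quadrature^s_{\tilde{\omegab}_{k,l}}[\tilde{\ctrlFunc}_{k,l}A_k,\psib]$ equals $\sum_j (A_k)_{mj}\,\quadrature^s_{\omega_j+\nu_{k,l}}[\tilde{\ctrlFunc}_{k,l}f_j]$. This is the same componentwise identification done in the proof of \Cref{thm:filon-picard-quadrature}, now with matrix $\tilde{\ctrlFunc}_{k,l}(t)A_k$ and shifted frequency vector $\tilde{\omegab}_{k,l}$. The only subtlety is that the envelope vector in \eqref{eq:matvec-quadrature-rule} must remain $\boldf = R(t)\psib$, while the carrier $e^{\ii\nu_{k,l}t}$ is absorbed into the weights through the shift $\omega_j\mapsto\omega_j+\nu_{k,l}$. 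This is exactly how \eqref{eq:controlled-filon-timestep-system-of-equations} is written, with $B^s_{*,j}(\tilde{\omegab}_{k,l})$ and derivatives of $\tilde{\ctrlFunc}_{k,l}\boldf$. Since $A_k$ is constant, it commutes with the derivatives and pulls out: $(\tilde{\ctrlFunc}_{k,l}A_k)_{mj}f_j = (A_k)_{mj}\tilde{\ctrlFunc}_{k,l}f_j$, and linearity of $\quadrature^s_\omega$ in its argument finishes the identification.

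The error in component $m$ is then $\sum_{k,l,j}(A_k)_{mj}\big(I_{\omega_j+\nu_{k,l}}-\quadrature^s_{\omega_j+\nu_{k,l}}\big)[\tilde{\ctrlFunc}_{k,l}f_j]$. By the triangle inequality and \Cref{thm:scalar-filon-accuracy}, which applies because $\omega_j+\nu_{k,l}\in\R$ and $\tilde{\ctrlFunc}_{k,l}f_j\in C^{2s+2}$, each term is bounded by $|(A_k)_{mj}|\,C_s\|(\tilde{\ctrlFunc}_{k,l}f_j)^{(2s+2)}\|_{L^\infty}\Delta t^{s+1}\min(\Delta t^{s+2},|\omega_j+\nu_{k,l}|^{-s-2})$. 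Bounding $|(A_k)_{mj}|\le\max_m|(A_k)_{mj}|$ makes the right-hand side independent of $m$, and taking the maximum over $m$ gives \eqref{eq:controlled-filon-local-truncation-error}.

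The main obstacle is the bookkeeping in the second step: checking that the shifted weight matrices act on the unshifted envelope $\boldf$ with the correct scalar envelope $\tilde{\ctrlFunc}_{k,l}$. This requires the carrier frequencies to be real so that the scalar theorem applies. The drift term needs no special treatment, since it is the case $\nu=0$ with $\tilde{\ctrlFunc}\equiv1$.
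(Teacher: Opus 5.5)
Your proposal is correct and follows essentially the same argument as the paper. Both use the Picard form, split $I[A,\psib]$ into integrals with carrier-shifted frequencies $\omega_j+\nu_{k,l}$ acting on the unshifted envelope $\boldf$, identify each piece with the corresponding Filon quadrature, and bound term by term with the constant entries $(A_k)_{mj}$ pulled out of the $L^\infty$ norms. The only cosmetic difference is that you apply \Cref{thm:scalar-filon-accuracy} componentwise directly, whereas the paper invokes \Cref{thm:filon-picard-quadrature} for each $(k,l)$ term; that theorem's proof is itself this same componentwise reduction.
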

\begin{proof}
Moving the summations over the control Hamiltonians and their carriers outside of the integral $I[A,\psib]$, where $A(t)$ has the structure of \eqref{eq:scalar-controlled-schrodinger}, we get
\begin{equation*}
    I[A,\psib]
    = \sum_{k=1}^{\nHam} \sum_{l=1}^{N_{\mathrm{F},k}}
        \int_\tCurr^\tNext\tilde{\ctrlFunc}_{k,l}(t)A_k e^{\ii\diag(\tilde\omegab_{k,l})t}\boldf(t)\dd t
    = \sum_{k=1}^{\nHam} \sum_{l=1}^{N_{\mathrm{F},k}}
        I_{\tilde\omegab_{k,l}}[\tilde{\ctrlFunc}_{k,l} A_k, \psib].
\end{equation*}
If we approximate each $I_{\tilde{\omegab}_{k,l}}[\tilde{\ctrlFunc}_{k,l} A_k,\psib]$ with the Filon quadrature $\quadrature^s_{\tilde{\omegab}_{k,l}}[\tilde{\ctrlFunc}_{k,l} A_k,\psib]$, then the Picard form $\psib(t_{n+1})=\psib(t_n) + I[A,\psib]$ becomes \eqref{eq:controlled-filon-quadrature-timestep}. The linear system of equations \eqref{eq:controlled-filon-timestep-system-of-equations} results directly from the definition of $\quadrature^s_{\tilde{\omegab}_{k,l}}[\tilde{\ctrlFunc}_{k,l} A_k,\psib]$ and using the fact that the time dependence of $A(t)$
is limited to the control pulses $\ctrlFunc_k(t)$, which commute with each matrix $A_k$. The local truncation error of one timestep is exactly the sum of the errors in approximating each $I_{\tilde{\omegab}_{k,l}}[\tilde{\ctrlFunc}_{k,l}A_k,\psib]$ with $\quadrature^s_{\tilde{\omegab}_{k,l}}[\tilde{\ctrlFunc}_{k,l}A_k,\psib]$. Applying \Cref{thm:filon-picard-quadrature} to each term and factoring out the time-independent $(A_k)_{mj}$ from each $L^\infty$ norm  gives \eqref{eq:controlled-filon-local-truncation-error}.
\end{proof}

\begin{remark}[Behavior of the Controlled Filon Method] The error bound \eqref{eq:controlled-filon-local-truncation-error} is governed by the smallest \emph{modified} frequency $\min_{j,k,l}|\omega_j + \nu_{k,l}|$, and the method is exact when each $\tilde{\ctrlFunc}_{k,l}(t) f_j(t)$ is a polynomial of degree $2s+1$ or lower (which follows directly from \Cref{cor:exact-quadrature}).
\end{remark}

\subsection{Comparison to Prior Work} \label{sec:prior-work}
A Filon-type quadrature for highly oscillatory systems of ODEs was developed by Khanamiryan \cite{Khanamiryan2008, Khanamiryan2011} for problems of the form $\dot{\mathbf{y}} = A_\omega(t)\mathbf{y} + \mathbf{f}(t,\mathbf{y})$, where $A_\omega(t)$ has large, purely imaginary eigenvalues. In their formulation, variation of parameters is used to write the solution as $X_\omega(t)\mathbf{y}_0 + \int X_\omega(t-\tau)\mathbf{f}(\tau, \mathbf{y}) \dd \tau$, where the fundamental matrix $X_\omega$ satisfies a matrix linear ODE $\dot{X}_\omega = A_\omega X_\omega$. Like our method, theirs becomes more accurate as the frequency grows.

When applying their method to a controlled \schrodinger equation, $A_\omega(t)$ can be taken to include either the entire Hamiltonian or only the drift term $-\ii H_d$, in which case the control Hamiltonians are incorporated into the forcing term $\mathbf{f}(t,\boldsymbol\psi) = -\ii H_c(t)\psib$. In the former case, the method reduces to a Magnus-expansion integrator \cite{Blanes2009Magnus}, and Filon quadrature is not used. In the latter case, $\int X_\omega(t-\tau)\mathbf{f}(\tau, \mathbf{y}) \dd \tau$ is discretized using Filon quadrature, i.e., $\mathbf{f}$ (which includes the highly oscillatory solution $\psib$) is replaced by a Hermite interpolating polynomial, and the resulting system of equations for each timestep is solved by waveform-relaxation iteration.

In contrast, we attribute the oscillation directly to the solution via $\psi_j = f_j e^{\ii \omega_j t}$ \eqref{eq:oscillatory-ansatz} using frequencies inferred from the drift Hamiltonian, discretize $\int -\ii H(t) \psib \dd t$ using Filon quadrature with $\psib$ factored into slowly varying envelope and highly oscillatory carrier-wave parts, and replace only the envelope $\boldf$ by a Hermite interpolating polynomial. We additionally account for explicit oscillation in $H(t)$ via the Controlled Filon method (\Cref{sec:controlled-filon}). Each timestep is performed via a single linear solve. 

\subsection{Stability Analysis for the Second- and Fourth-Order Methods} \label{sec:stability}
Consider the Filon method applied to the Dahlquist equation $\frac{dy}{dt} = \lambda y$,  
\begin{equation*}
    y_{n+1} = y_n + \lambda\int_{t_n}^{t_{n+1}} p^s_\hermiteH(t)e^{\ii\omega t}\dd t,  \end{equation*}
where
\begin{equation*}
        p^s_\hermiteH(t) = \sum_{j=0}^s \ell^s_{\Explicit,j}(t)f^{(j)}(t_n) + \sum_{j=0}^s \ell^s_{\Implicit,j}(t)f^{(j)}(t_{n+1}).
\end{equation*}
We write $f_n \coloneq f(t_n)$ and $f_{n+1} \coloneq f(t_{n+1})$, so that $y_n = f_n e^{\ii\omega t_n}$. With $\kappa \coloneq \lambda - \ii \omega$, we can write
\begin{equation*}
    \frac{dy}{dt} = \lambda y,\quad y = e^{\ii \omega t}f \implies \frac{df}{dt} = \kappa f \implies f^{(j)} = \kappa^j f.
\end{equation*}
Expressing $p^s_\hermiteH(t)$ in terms of $f$ we get 
\begin{equation*}
    p^s_\hermiteH(t) = f_n\sum_{j=0}^s \ell^s_{\Explicit,j}(t)\kappa^j + f_{n+1}\sum_{j=0}^s\ell^s_{\Implicit,j}(t)\kappa^j.
\end{equation*}
To compute the weights from the integrals we parameterize
\begin{equation} \label{eq:t-parameterization}
    t = t_n + \sigma\Delta t, \qquad \sigma \in [0,1],
\end{equation}
and  introduce
\begin{equation*}
    z \coloneq \lambda\Delta t, \qquad \varphi \coloneq \omega\Delta t, \qquad \hat\kappa \coloneq (\lambda - \ii\omega)\Delta t = \kappa\Delta t = z - \ii\varphi.
\end{equation*}
When $\omega = 0$, the method reduces to the Hermite method, which is A-stable \cite{Hagstrom_HermiteInTime}, so we assume $\omega \neq 0$ (hence $\varphi \neq 0$) in what follows. Note that $\lambda\kappa^j\Delta t^{j+1} = z\hat\kappa^j$. With the change of variables \eqref{eq:t-parameterization}, the explicit term is
\begin{equation*}
    \int_{t_n}^{t_{n+1}} e^{\ii\omega t}\ell^s_{\Explicit,j}(t) \dd t = \Delta t^{j+1} e^{\ii\omega t_n} \int_0^1 e^{\ii\varphi\sigma}\hat\ell^s_{\Explicit,j}(\sigma) \dd\sigma, \quad \hat\ell^s_{\ExplicitOrImplicit,j}(\sigma) \coloneq \Delta t^{-j}\ell^s_{\ExplicitOrImplicit,j}(t_n+\sigma\Delta t),
\end{equation*}
where $\hat\ell^s_{\ExplicitOrImplicit,j}(\sigma)$ is the cardinal polynomial on the reference interval $[0,1]$. Then
\begin{equation*}
    \lambda\int_{t_n}^{t_{n+1}} \Big(\sum_{j=0}^s\ell^s_{\Explicit,j}(t)\kappa^j f_n\Big)e^{\ii\omega t}\,\dd t = z\,y_n\sum_{j=0}^s\hat\kappa^j\int_0^1 e^{\ii\varphi\sigma}\hat\ell^s_{\Explicit,j}(\sigma)\,\dd\sigma.
\end{equation*}
For the implicit part we use the same parameterization and note that
\begin{equation*}
    f_{n+1}e^{\ii\omega t_{n+1}} = f_{n+1}e^{\ii\omega t_n}e^{\ii\omega\Delta t} = f_{n+1}e^{\ii\omega t_n}e^{\ii\varphi},
\end{equation*}
so $f_{n+1}e^{\ii\omega t_n} = e^{-\ii\varphi}y_{n+1}$. Similar manipulations give
\begin{equation*}
    \lambda\int_{t_n}^{t_{n+1}}\Big(\sum_{j=0}^s\ell^s_{\Implicit,j}(t)\kappa^j f_{n+1}\Big)e^{\ii\omega t}\,\dd t = z\,y_{n+1}\sum_{j=0}^s\hat\kappa^j\int_0^1 e^{\ii\varphi(\sigma-1)}\hat\ell^s_{\Implicit,j}(\sigma)\,\dd\sigma.
\end{equation*}
Denote the explicit coefficients 
\begin{equation*}
    \gamma_j = \int_0^1 e^{\ii\varphi\sigma}\hat\ell^s_{\Explicit,j}(\sigma)\,\dd\sigma.
\end{equation*}
Then we can exploit $\hat\ell^s_{\Implicit,j}(\sigma) = (-1)^j\hat\ell^s_{\Explicit,j}(1-\sigma)$ to find the implicit coefficients
\begin{equation*}
    \int_0^1 e^{\ii\varphi(\sigma-1)}\hat\ell^s_{\Implicit,j}(\sigma)\,\dd\sigma
    = \int_0^1 e^{\ii \varphi(\sigma-1)}(-1)^j \hat\ell^s_{\Explicit,j}(1-\sigma)\dd\sigma
    = (-1)^j\overline{\gamma_j}.
\end{equation*}
If we write the method as $Q(z)y_{n+1} = P(z)y_n$ we thus have
\begin{equation*}
    P(z) = 1 + z\sum_{j=0}^s \gamma_j\hat\kappa^j, \qquad Q(z) = 1 - z\sum_{j=0}^s (-1)^j\overline{\gamma_j}\,\hat\kappa^j.
\end{equation*}

The step amplification factor is $y_{n+1}/y_n = P(z)/Q(z)$, so for a given $s$ A-stability amounts to showing that $|P(z)| \le |Q(z)|$ for every $\omega$ and all $\lambda$ with $\Re(\lambda) \le 0$. We now establish the property for $s = 0$ and $s = 1$.

When $s=0$, we have $\hat\ell^0_{\Explicit,0}(\sigma) = 1-\sigma$ and $\hat\ell^0_{\Implicit,0}(\sigma) = \sigma$. Let $\alpha \coloneq \ii\varphi$ and $\beta \coloneq e^{\ii\varphi} - 1 - \ii\varphi$. Then a direct computation yields
\begin{align*}
    \int_0^1(1-\sigma)e^{\ii\varphi\sigma}\dd\sigma = \frac{e^\alpha - 1 - \alpha}{\alpha^2}  = -\frac{\beta}{\varphi^2}, \quad
    \int_0^1 \sigma e^{\ii\varphi(\sigma-1)}\dd\sigma = -\frac{\overline \beta}{\varphi^2},
\end{align*}
so that $P(z) = 1 - z\beta/\varphi^2$ and $Q(z) = 1 + z\overline \beta/\varphi^2$.
Since $\varphi^2 > 0$, the inequality $|P(z)| \le |Q(z)|$ is equivalent to $|\varphi^2 P(z)| \le |\varphi^2 Q(z)|$, and
\begin{align*}
    |\varphi^2 P(z)|^2 &= (\varphi^2 - z\beta)(\varphi^2 - \bar z\overline \beta) = \varphi^4 - 2\varphi^2\Re(z\beta) + |z|^2|\beta|^2, \\
    |\varphi^2 Q(z)|^2 &= (\varphi^2 + z\overline \beta)(\varphi^2 + \bar z \beta) = \varphi^4 + 2\varphi^2\Re(z\overline \beta) + |z|^2|\beta|^2.
\end{align*}
Subtracting these,
\begin{equation*}
    |\varphi^2 P(z)|^2 - |\varphi^2 Q(z)|^2 = -2\varphi^2\Re(z\beta + z\overline \beta) = -4\varphi^2\Re(\beta)\Re(z).
\end{equation*}
Since $4\varphi^2 > 0$ and
\begin{equation*}
    \Re(\beta) = \Re(e^{\ii\varphi}-1) = \cos(\varphi) - 1 \leq 0,
\end{equation*}
the difference $|\varphi^2 P|^2-|\varphi^2 Q|^2$ is nonpositive when $\Re(z) = \Re(\Delta t \lambda) \leq 0$, so $|P(z)| \le |Q(z)|$. Equality occurs when $\lambda$ is purely imaginary. Finally, note that $P/Q$ has no pole in the left half-plane, so the amplification factor is analytic there:
\begin{equation*}
    \Re(z_{\mathrm{pole}}) = \frac{\varphi^2}{|\beta|^2}(1 - \cos\varphi) \ge 0.
\end{equation*}

To study the case $s=1$ we note that $Q(z) = \overline{P(-\bar z)}$. Then, if the zeros of $P(z)$ are $z_k$ we can write $P(z) = \gamma_{s} \prod_{k=0}^{s}(z-z_k)$, and with  $Q(z) = \overline{P(-\bar z)} = (-1)^{s+1} \bar{\gamma}_{s} \prod_{k=0}^{s}(z+\bar{z}_k)$ we see that the zeros of $Q(z)$ are those of $P(z)$ mirrored around the imaginary axis. This reduces the stability question, $|P(z)/Q(z)| \leq 1, \Re(z) \leq 0$, to simply checking that the zeros of $P(z)$ lie in the left half-plane (so that the zeros of $Q(z)$ lie in the right half-plane and $P/Q$ has no poles for $\Re(z) \leq 0$). One way to check this is by the Routh--Hurwitz criterion for polynomials with complex coefficients \cite{frank1946zeros}. For the case $s=1$, writing $P(z) = 1 + w_1 z + w_2 z^2$, the Routh--Hurwitz stability conditions reduce to checking that
\begin{equation*}
  \Re(w_1\bar{w}_2) \geq 0,  \ \ \ \ \Im(w_2)^2 \leq \Re(w_1) \Re(w_1 \bar{ w}_2).
\end{equation*}
The coefficients $w_1$ and $w_2$ can be computed explicitly. Recalling $\alpha = \ii\varphi$, we have
\begin{align*}
  w_2 &= \int_0^1 e^{\alpha \sigma}\sigma(1-\sigma)^2\,\dd\sigma  = \frac{(2\alpha-6)e^{\alpha}+\alpha^2+4\alpha+6}{\alpha^4}, \\
  w_1 &= \int_0^1 e^{\alpha \sigma}(1-\sigma)^2(1+2\sigma)\,\dd\sigma-\alpha w_2 =  \frac{(-2\alpha^2+12\alpha-12)e^{\alpha}-2\alpha^3-4\alpha^2+12}{\alpha^4}.
\end{align*}
Using that $\alpha^4=(\ii\varphi)^4 = \varphi^4$ and $e^{\alpha}=\cos\varphi+\ii\sin\varphi$, we can separate the real and imaginary parts as $w_1 = (R_1+\ii I_1)/\varphi^4$ and $w_2 = (R_2+\ii I_2)/\varphi^4$, where
\begin{alignat*}{2}
  R_1 &= 4\varphi^2+12+(2 \varphi^2 - 12) \cos \varphi -12 \varphi \sin \varphi, \quad &
  R_2 &= 6-\varphi^2 - 6 \cos \varphi - 2 \varphi \sin \varphi, \\
  I_1 &= 2 \varphi^3 + 12 \varphi \cos \varphi + (2\varphi^2-12) \sin \varphi, \quad &
  I_2 &= 4 \varphi + 2\varphi \cos \varphi - 6 \sin \varphi.
\end{alignat*}

Having introduced $R_1$, $R_2$, $I_1$, and $I_2$, the stability conditions reduce to 
\begin{equation*}
G_1 = R_1 R_2 + I_1 I_2 \geq 0, \ \ G_2 = R_1 G_1 - \varphi^4 I_2^2 \geq 0.
\end{equation*}

The equation for $G_1$ can be written $G_1 = a + b \cos \varphi + c \sin \varphi,$ with
\begin{equation*}
   a = 4\varphi^4 + 24\varphi^2 + 144,\ \ b = 2 \varphi^4 + 48\varphi^2 - 144,\ \ c = - 144\varphi.
\end{equation*}
Using that $|b \cos \varphi + c \sin \varphi | \le \sqrt{b^2+c^2}$, and $a>0$, we can check that
$a^2-b^2-c^2 \geq 0$ instead of checking $G_1 \geq 0$. A lengthy calculation shows that
\begin{equation*}
a^2-b^2-c^2=12\varphi^2(\varphi^6+1728)-20736\varphi^2 = 12\varphi^{8}  \geq 0.
\end{equation*}
To check the second condition, $G_2 \geq 0$, we used SymPy \cite{SymPy} to simplify $G_2$ into
\begin{equation*}
 G_2 =108  (\varphi^2+4)^2 \left(1 + \cos(\varphi+\theta) \right)^2,
\end{equation*}
with $\theta$ defined by
\begin{equation*}
  \cos \theta = \frac{\varphi^2-4}{\varphi^2+4}, \qquad \sin \theta = \frac{4\varphi}{\varphi^2+4}.
\end{equation*}
In principle, higher $s$ can be checked in the same way, but we did not pursue this.

\subsection{Global Error} \label{sec:global-error}
    For a stable one-step method, the global error is controlled by the accumulated local truncation error; the error at time $t \leq T$ is at most a constant multiple of the sum of the one-step errors of each timestep \cite{GeometricNumericalIntegration_2006}. Hence, the local bounds on the error for the Filon (\Cref{cor:filon-timestepping}) and Controlled Filon (\Cref{cor:controlled-filon}) methods give 
\begin{equation} \label{eq:filon-global-error}
    \| \psib_n - \psib(t_n) \|_\infty
    = \oCal\Big(\Delta t^{s} \min\big(\Delta t^{s+2}, \omega_{\min}^{-s-2}\big) \Big),
\end{equation}
    with $\omega_{\min}$ replaced by the smallest modified frequency $\min_{j,k,l}|\omega_j + \nu_{k,l}|$ for the Controlled Filon method. Hence the Filon and Controlled Filon methods have order $2s+2$, conditional on stability. We have proven A-stability for $s = 0, 1$, and hypothesize that the methods remain A-stable for $s > 1$. Because the Hamiltonian has real eigenvalues, $-\ii H(t)$ has purely imaginary eigenvalues, so A-stability implies that the methods are stable at any $\Delta t$ for time-independent Hamiltonians, and hence converge with global order $2s+2$. When the Hamiltonian \emph{does} depend on time, stability is not guaranteed, but in the numerical experiments of \Cref{sec:numerical-experiments}, we do not observe any instability that limits the convergence of the methods.

\subsection{Computational Cost} \label{sec:computational-cost}
We consider the case of a controlled Hamiltonian \eqref{eq:scalar-controlled-hamiltonian-A}, and assume that each timestep is performed by using GMRES to solve the linear system of equations $S^s_\Implicit \psib_{n+1} = S^s_\Explicit \psib_n$. The cost of each timestep taken by the Filon, Controlled Filon, and Hermite methods is dominated by applications of $S^{s}_{\ExplicitAndImplicit}$. We can compare the three methods' costs for any value of $s$ in terms of the number of linear algebra operations required to apply $S^{s}_{\ExplicitAndImplicit}$. We group the operations into \emph{GEMV} ($\boldsymbol{y} \gets \alpha M \boldsymbol{x} + \beta \boldsymbol{y}$, i.e., applications of constant Hamiltonians $A_k$), \emph{diagonal matrix-vector products} (the same with a diagonal matrix, e.g., $\Omega \coloneq \diag(\omegab)$ or the diagonal matrices $G^{s}_{\ExplicitAndImplicit,k,m}$ defined in \eqref{eq:efficient-filon-S-formula} and \eqref{eq:controlled-filon-G-definition}), and \emph{AXPY} ($\boldsymbol{y} \gets \alpha \boldsymbol{x} + \boldsymbol{y}$, excludes the fused $\beta \boldsymbol{y}$ additions in the other operation types). The latter two operations are $\oCal(N)$, so if the matrices $A_k$ are dense we expect that the $(s+1)\nHam$ GEMVs will dominate the cost. In many quantum systems of interest, including the superconducting transmon system studied in \Cref{sec:multi-qudit-gate-problem}, the matrices $A_k$ are sparse ($1$--$2$ nonzero entries per row), so the cost of the $\oCal(N)$ operations has a larger relative impact.

The operation counts can be derived from the explicit formulas in \Cref{app:filon-method-S-formulas,app:controlled-filon-method-S-formulas}, and are tabulated in \Cref{tab:matvec-counts}. We base our counts on the efficient Filon and Controlled Filon methods, defined in \eqref{eq:efficient-filon-S-formula} and \eqref{eq:explicit-controlled-filon-efficient}, taking advantage of all opportunities to reuse computed results (e.g., reusing $F_m\psib$, and only computing $G^s_{\ExplicitAndImplicit,k,m}$ once per timestep). All three methods require the same number of GEMV operations, so we expect the per-timestep cost to be comparable across the methods.
\begin{table}[htb!]
    \centering
    \setlength{\tabcolsep}{3pt}
    \begin{tabular}{lccc}
        \toprule
        \multicolumn{4}{c}{Operation Counts per Application of $S^s_{\ExplicitAndImplicit}$ ($S^s_{\Implicit}$ Once per GMRES iteration)} \\
        \midrule
        & Hermite & \multicolumn{2}{c}{Filon \& Controlled Filon} \\
        \midrule
        GEMV & $(s+1)\nHam$ & \multicolumn{2}{c}{$(s+1)\nHam$} \\
        Diagonal Mat-Vec & 0 & \multicolumn{2}{c}{$\binom{s+1}{2}+(s+1)\nHam$} \\
        AXPY & $1 + s\nHam + \binom{s+1}{2}\nHam$ & \multicolumn{2}{c}{$1 + \binom{s+1}{2}\nHam$} \\
        \midrule
        \multicolumn{4}{c}{Operation Counts in Assembly of $G^s_{\ExplicitAndImplicit,k,m}$ (Once per Timestep, per Side)} \\
        \midrule
        & Hermite & Filon & Controlled Filon \\
        \midrule
        AXPY & 0 & $\binom{s+2}{2}\nHam$ & $\binom{s+2}{2}\nFreq$ \\
        \bottomrule
    \end{tabular}
    \caption{Operations needed to apply $S^s_{\ExplicitAndImplicit}$ in the Filon and Controlled Filon methods for the case of a controlled Hamiltonian \eqref{eq:scalar-controlled-hamiltonian-A}, using the efficient forms \eqref{eq:efficient-filon-S-formula} and \eqref{eq:explicit-controlled-filon-efficient}. $\nFreq \coloneq \sum_k N_{\mathrm{F},k}$ is the total number of carrier waves in the control pulses.
    }
    \label{tab:matvec-counts}
\end{table}

\subsection{Application to Quantum Optimal Control} \label{sec:quantum-optimal-control}
In the HOHO method \cite{HOHO_2025} for quantum optimal control problems, the Hamiltonian is parameterized ($H(t) = H(t;\boldsymbol{\theta})$, $\boldsymbol{\theta} \in \R^{N_{\mathrm{Param}}}$) and the gradient of the objective function is computed exactly and efficiently (the number of differential equations solved numerically does not scale with $N_{\mathrm{Param}}$) by using a discrete adjoint approach.

In \cite{HOHO_2025}, the discrete adjoint for the Hermite method was derived using the fact that timesteps take the form $S_\Implicit \psib_{n+1} = S_\Explicit \psib_n$ (using this work's notation). Because timesteps in the Filon and Controlled Filon methods take the same form (but with different matrices), the HOHO method's discrete adjoint also applies to them, differing only in how $S_{\ExplicitAndImplicit}^\dagger$ and $\partial S_{\ExplicitAndImplicit} / \partial \theta_k$ are applied in the adjoint evolution and gradient accumulation. As with the Hermite method, the resulting scheme will compute the gradient exactly and efficiently without relying on automatic differentiation (and hence will not require memory allocation at each timestep). Implementing this discrete adjoint is beyond the scope of this work.

\section{Numerical Experiments} \label{sec:numerical-experiments}
\subsection{Rabi Oscillator} \label{sec:rabi-oscillator-experiment}
In \Cref{fig:rabi-filon-intro}, we compare the convergence of the Hermite, Filon, and Controlled Filon methods with $s = 0$, $1$, and $2$ for the Rabi oscillator problem using parameters $\omega_0 = 1$, $\omega = 0.9$, $E = 0.01$ in \eqref{eq:rabi-lab-hamiltonian}. We simulate for duration $T = 200\pi / \omega_0$, which corresponds to 100 periods of the transition frequency $\omega_0$ (50 oscillations of each solution component) in the lab frame. The drive is near-resonant but detuned ($\omega = 0.9\omega_0$), which causes oscillation in the RWA frame, but at a lower frequency than in the lab frame. We simulate in both the lab and RWA frames.

For each method, we compute a coarse numerical solution with a large $\Delta t$, then iteratively refine it by halving $\Delta t$. For each simulation, we compute the error in the numerical solution at the final time against a reference solution computed by a high-order adaptive Runge--Kutta method \cite{Verner2010, Rackauckas2017} with absolute and relative tolerances of $10^{-15}$. We also use $10^{-15}$ as the absolute and relative tolerances when solving the linear system at each timestep using GMRES for the Hermite, Filon, and Controlled Filon methods. We use no preconditioning, and take $\psib_n$ as the initial guess for $\psib_{n+1}$.

\Cref{fig:rabi-filon-intro} reports the results in both the lab and RWA frames. Each method converges at the expected order $\oCal(\Delta t^{2(s+1)})$, confirming the error analysis of \Cref{sec:filon-for-ODEs,sec:controlled-filon}. Moreover, the Filon and Controlled Filon methods reach the same target accuracy as the Hermite method of the same order while using substantially larger timesteps. The advantage is greatest in the lab frame and at low order: to reach an error of $10^{-2}$, the $s=0$ Filon and Controlled Filon methods use timesteps roughly $17\times$ and $160\times$ larger, respectively, than the $s=0$ Hermite method. At $s=1$, the Filon and Controlled Filon methods achieve a given accuracy using timesteps roughly $3\times$ and $4\times$ larger than Hermite, and roughly $2\times$ larger at $s=2$.

We also mark the RWA modeling error with a horizontal line. Even an exact solution of the RWA dynamics, transformed back into the lab frame, differs from the exact lab-frame solution by an $\ell^2$ final-time error of roughly $10^{-2}$. Therefore, any apparent improvement in accuracy in the RWA frame below the $\approx 10^{-2}$ threshold is not an improvement with respect to the true dynamics. In the lab frame, the methods converge to the true solution, which is why the Filon and Controlled Filon methods are valuable: they make solving the exact lab-frame problem affordable, rather than requiring the RWA.

\subsection{Multi-Qudit Gate Problem} \label{sec:multi-qudit-gate-problem}
We now focus on a larger and more practical problem: implementing a CNOT gate on two qudits coupled to a resonator bus, a problem that has been studied in \cite{petersson2022optimal, HOHO_2025}. The lab-frame drift Hamiltonian for $N_Q$ transmon qudits in the dispersive limit is
\begin{equation*}
    H_d = \sum_{q=1}^{N_Q}\left(
        \omega_q a_q^\dagger a_q - \frac{\xi_q}{2}a_q^\dagger a_q^\dagger a_q a_q - \sum_{p > q} \xi_{pq}a_p^\dagger a_p a_q^\dagger a_q 
    \right), \quad
    H_c(t) = \sum_{q=1}^{N_Q} c_q(t)\left(a_q + a_q^\dagger\right),
\end{equation*}
where $\omega_q$ and $\xi_q$ are, respectively, the ground state transition frequency and self-Kerr coefficient (anharmonicity) of qudit $q$, and $\xi_{pq}$ is the cross-Kerr coefficient between qudits $p$ and $q$. Each operator $a_q$ is the lowering operator of the subsystem corresponding to qudit $q$:
\begin{align*}
    a_q \coloneq  I_{n_{N_Q}} \otimes \cdots \otimes I_{n_{q+1}} \otimes L_{n_q} \otimes I_{n_{q-1}} \otimes \cdots \otimes I_{n_1}, \quad N \coloneq \prod_{q=1}^{N_Q} n_q, \\
    L_n \coloneq \begin{bmatrix}
    0 & \sqrt{1} &        &            \\
      & \ddots   & \ddots &            \\
      &          & \ddots & \sqrt{n-1} \\
      &          &        & 0          \\
    \end{bmatrix}
    \in \R^{n \times n},\quad
    I_n \coloneq \begin{bmatrix}
        1 & & \\
        & \ddots & \\
        & & 1 
    \end{bmatrix} \in \R^{n \times n},
\end{align*}
where $n_q$ is the number of levels modeled for qudit $q$.

Driving transitions between the qudit levels requires delivering energy near the qudit transition frequencies, so the lab-frame control functions take the ansatz
\begin{equation*}
    c_q(t) \coloneq 2 \Re(d_q(t) e^{\ii \omega_q t}) \equiv d_q(t)e^{\ii \omega_q t} + \overline{d}_q(t)e^{-\ii \omega_q t}, \quad d_q(t) \coloneq c_{I,q}(t) + \ii c_{Q,q}(t).
\end{equation*}

When modeling two qudits coupled to a resonator bus, the resonator is treated as another qudit, with a higher transition frequency and smaller self-Kerr coefficient. We label the resonator with $R$ to distinguish it from qudits $1$ and $2$. Then the lab-frame drift Hamiltonian is
\begin{multline*}
    H_d = \omega_1 a_1^\dagger a_1 + \omega_2 a_2^\dagger a_2  + \omega_R a_R^\dagger a_R
        -\frac{\xi_1}{2} a_1^\dagger a_1^\dagger a_1 a_1 -\frac{\xi_2}{2}  a_2^\dagger a_2^\dagger a_2 a_2 -\frac{\xi_R}{2} a_R^\dagger a_R^\dagger a_R a_R \\
        - \xi_{21} a_2^\dagger a_2 a_1^\dagger a_1 - \xi_{R1}  a_R^\dagger a_R a_1^\dagger a_1 - \xi_{R2} a_R^\dagger a_R a_2^\dagger a_2.
\end{multline*}

The values used for the physical parameters in this experiment are provided in \Cref{tab:cnot3-physical-parameters}, and were originally obtained from \cite{petersson2022optimal}. The qudits are coupled weakly to one another, but more strongly to the resonator, which facilitates a two-qudit gate.

\begin{table}[htb!] 
\centering
\begin{tabular}{lccccc}
  \toprule
  \multicolumn{6}{c}{Physical Parameters of CNOT System} \\
  \midrule[\heavyrulewidth]
  Subsystem $q$ & $\omega_q$ & $\xi_{q1}$ & $\xi_{q2}$ & $\xi_{qR}$ & $n_{q}$ \\
  \midrule
  Qudit $1$ & 4.11 & 2.20(-1) & -- & -- & 4  \\
  Qudit $2$ & 4.82 & 1.00(-6) & 2.25(-1) & -- & 4  \\
  Resonator & 7.84 & 2.49(-3) & 2.52(-3) & 2.83(-5) & 10 \\
  \bottomrule
\end{tabular}
\caption{Physical parameters (rounded to three significant digits) for two qudits coupled to a resonator bus. The values for each $\omega$ and $\xi$ are given as $\omega/2\pi$ and $\xi/2\pi$ in GHz, and $\xi_{qq} \coloneq \xi_q $ is the self-Kerr coefficient of subsystem $q$.}
\label{tab:cnot3-physical-parameters}
\end{table}

The diagonal operators $\omega_q a_q^\dagger a_q$ in $H_d$ cause rapid oscillation in each component of $\psib$, with frequency increasing with the excitation level of qudit $q$. Applying the rotating frame transformation $U(t) \coloneq \exp(\ii t \sum_q \omega_q a_q^\dagger a_q)$ gives the Hamiltonian
\begin{gather*}
\tilde H_d = -\sum_{q=1}^{N_Q}\left(
    \frac{\xi_q}{2}a_q^\dagger a_q^\dagger a_q a_q + \sum_{p>q}\xi_{pq}a_p^\dagger a_p a_q^\dagger a_q
\right), \\
    \tilde{H}_c(t) = \sum_{q=1}^{N_Q}c_{I,q}(t)(a_q+a_q^\dagger) + \ii c_{Q,q}(t)(a_q-a_q^\dagger) + d_q(t)e^{2\ii \omega_q t}a^\dagger_q + \overline{d}_q(t)e^{-2\ii\omega_q t}a_q.
\end{gather*}
In the RWA, we drop the counter-rotating terms that oscillate at $\pm2\omega_q$.

We now numerically simulate the system of two qudits coupled to a resonator bus under the influence of control pulses that implement a CNOT gate. We use the pulses found in \cite{HOHO_2025}, which were optimized to a gate fidelity of 99.99\% (verified in the lab frame). The gate duration is 550 nanoseconds, and the control functions $d_q(t)$ are degree-14 B-spline curves multiplied by carrier waves with frequencies $0$, $-2\pi\xi_1$, and $-2\pi \xi_2$ for qudits 1 and 2, and frequencies $0$, $-2\pi \xi_{R1}$, and $-2\pi \xi_{R2}$ for the resonator. Portions of the control pulses and solution are shown in \Cref{fig:controls-solution-cnot3}. We consider four initial conditions for the system, $\ket{000}$, $\ket{001}$, $\ket{010}$, and $\ket{011}$, which form the basis of the computational subspace.

\begin{figure}[htb!]
    \centering
    \includegraphics[width=\linewidth]{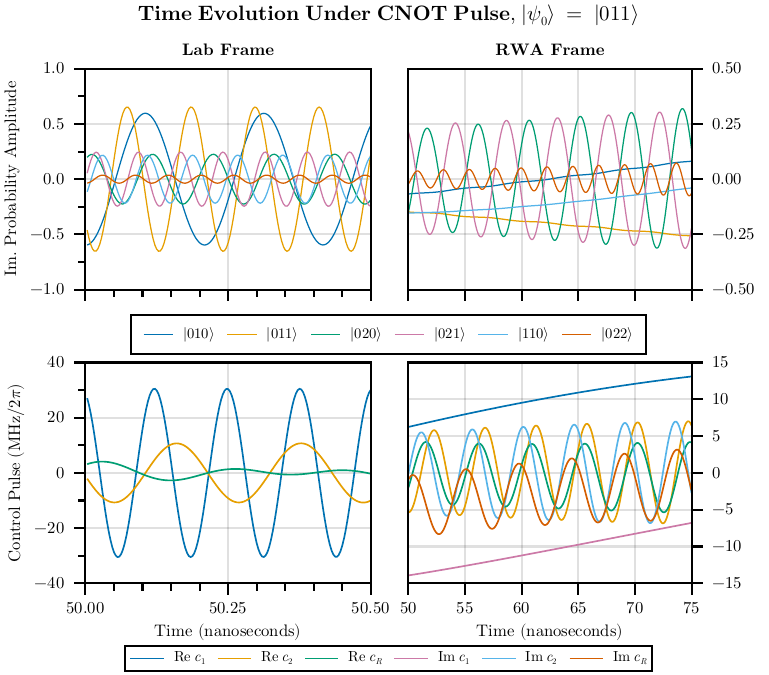}
    \caption{Lab- and RWA-frame time evolution for the CNOT gate simulation of \Cref{sec:multi-qudit-gate-problem}, starting from $\ket{\psib_0} = \ket{011}$. The RWA frame shows the window $50 \leq t \leq 75$ ns; the lab frame shows the much shorter window $50 \leq t \leq 50.5$ ns. Only the imaginary parts of the six largest probability amplitudes are shown.} 
    \label{fig:controls-solution-cnot3}
\end{figure}

As we did for the Rabi oscillator (\Cref{sec:rabi-oscillator-experiment}), we compare the Hermite, Filon, and Controlled Filon methods with $s = 0$, $1$, and $2$ by refining $\Delta t$ and computing the error against a high-order adaptive Runge--Kutta reference solution. For this more challenging example, we also record each method's runtime. As before, GMRES uses $10^{-15}$ tolerances, no preconditioning, and $\psib_n$ as the initial guess for $\psib_{n+1}$. The simulations ran on a single core of an Intel\textsuperscript{\textregistered} Xeon\textsuperscript{\textregistered} Gold 6148 CPU at 2.40 GHz. The results are shown in \Cref{fig:convergence-cnot3}; \Cref{tab:cnot3-speedup-time-basis} reports the speedup of the fastest method over the others at several target precisions.

\begin{figure}[htb!]
    \centering
    \includegraphics[width=\linewidth]{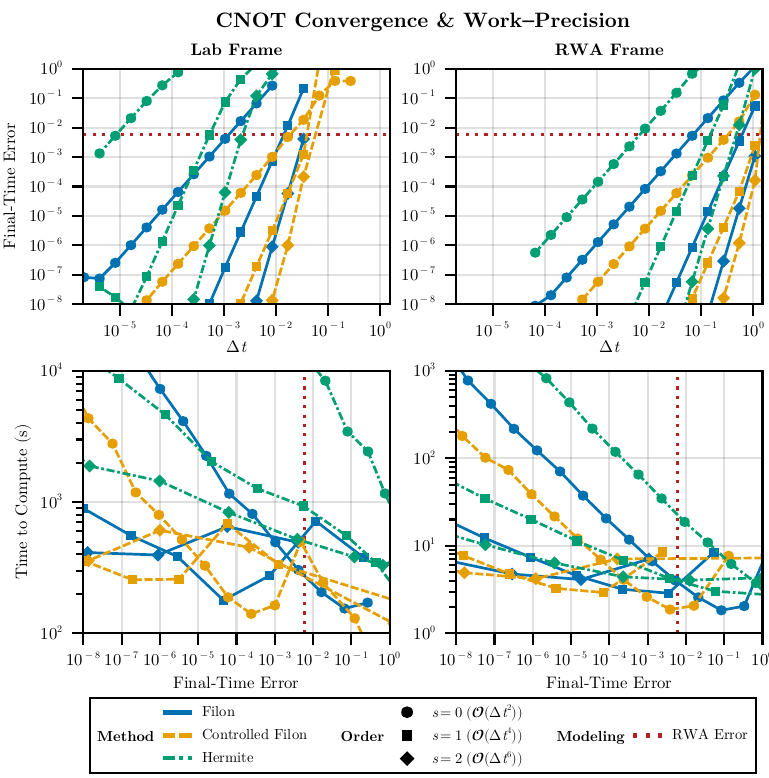}
    \caption{Convergence and work-precision of the Hermite, Filon, and Controlled Filon methods for the CNOT gate simulation of \Cref{sec:multi-qudit-gate-problem}, using the four computational-subspace basis states ($\ket{000}$, $\ket{001}$, $\ket{010}$, $\ket{011}$) as initial conditions. Final-time errors are measured using the Frobenius norm.}
    \label{fig:convergence-cnot3}
\end{figure}

\begin{table}
    \centering
\begin{tabular}{llrrrrrrr}
    \toprule
    \multicolumn{9}{c}{Relative Time to Compute Solution, Lab Frame} \\
    \midrule[\heavyrulewidth]
    & & \multicolumn{7}{c}{Target Final-Time Error} \\
    \cmidrule(lr){3-9}
    Method & $s$ & $10^{-1}$ & $10^{-2}$ & $10^{-3}$ & $10^{-4}$ & $10^{-5}$ & $10^{-6}$ & $10^{-7}$ \\
    \midrule
    Hermite & 0 & 2.3(1) & 7.7(1) & *2.1(2) & *5.6(2) & *1.2(3) & *3.7(3) & *1.1(4) \\
     & 1 & 3.7 & 5.9 & 8.0 & 9.3 & 1.0(1) & 1.9(1) & 3.0(1) \\
     & 2 & 2.8 & 3.4 & 4.3 & 4.7 & 4.2 & 5.6 & 6.1 \\
    \addlinespace
    Filon & 0 & 1.1 & 1.7 & 3.6 & 6.1 & 1.1(1) & 2.9(1) & 7.9(1) \\
     & 1 & 1.3 & 1.3 & 1.3 & 1.1 & 1.1 & 1.7 & 2.2 \\
     & 2 & 2.8 & 2.8 & 2.8 & 2.3 & 1.5 & 1.5 & 1.5 \\
    \addlinespace
    C-Filon & 0 & -- & -- & -- & -- & 1.5 & 3.1 & 7.2 \\
     & 1 & 1.6 & 1.8 & 1.8 & 1.5 & -- & -- & -- \\
     & 2 & 1.3 & 1.8 & 2.5 & 2.1 & 1.4 & 1.4 & 1.3 \\
    \bottomrule
\end{tabular}

    \caption{Relative time to simulate the CNOT gate of \Cref{sec:multi-qudit-gate-problem} to a desired lab-frame accuracy, starting from the computational-subspace basis states. The fastest method for each target error is marked ``--''. Each other entry reports the ratio of its time to the fastest method's (the ``speedup'' factor), to two significant digits. Asterisks mark extrapolated entries.}
    \label{tab:cnot3-speedup-time-basis}
\end{table}

\begin{figure}[htb!]
    \centering
    \includegraphics[width=\linewidth]{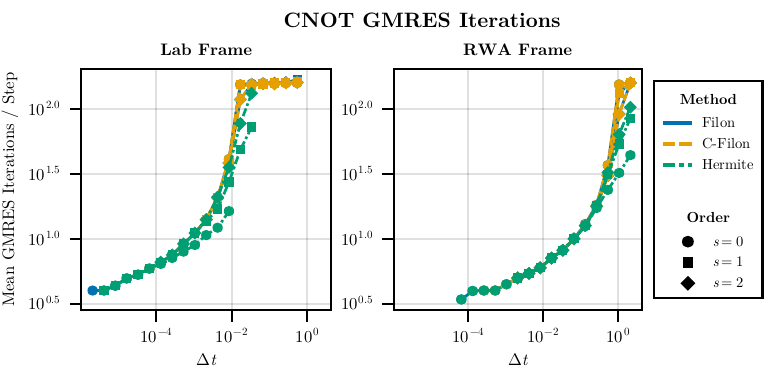}
    \caption{Average number of GMRES iterations per timestep for the Hermite, Filon, and Controlled Filon methods on the CNOT gate simulation of \Cref{sec:multi-qudit-gate-problem}.}
    \label{fig:cnot3-gmres}
\end{figure}

\subsection{Analysis} \label{sec:analysis}
The convergence plots in \Cref{fig:convergence-cnot3} show that the Filon and Controlled Filon methods are significantly more accurate than the Hermite method of the same order at the same stepsize, in both the RWA frame and (especially) the lab frame. For example, in the lab frame the $s=0$ Filon and Controlled Filon methods reach an accuracy of $10^{-2}$ using stepsizes that are approximately $100$ and $1000$ times, respectively, the stepsize required by $s=0$ Hermite.

However, these numbers do not necessarily mean that $s=0$ Filon and Controlled Filon will be $100$ and $1000$ times faster than $s=0$ Hermite. The cost per timestep differs across the three methods due to the additional $\oCal(N)$ operations that the Filon and Controlled Filon methods perform per application of $S^{s}_{\ExplicitAndImplicit}$ (\Cref{tab:matvec-counts}), as well as each method potentially requiring a different number of GMRES iterations to solve the linear system of equations at each timestep. The average number of GMRES iterations per timestep for each method is shown in \Cref{fig:cnot3-gmres}.

The runtime of all three methods could be improved by lowering the number of GMRES iterations per timestep (for example, through preconditioning). We expect such improvements to benefit Filon and Controlled Filon more than Hermite, which tends to require the fewest GMRES iterations here (\Cref{fig:cnot3-gmres}).

In this example, the stepsize advantage of the Filon and Controlled Filon methods translates to a significantly lower computational cost than Hermite. Indeed, \Cref{tab:cnot3-speedup-time-basis} and \Cref{fig:convergence-cnot3} show that in both the lab and RWA frames, the fastest method is always Controlled Filon. Furthermore, the $s=0$ Filon and Controlled Filon methods remain useful even at accuracies where the $s=0$ Hermite method is extremely expensive. For a target error of $10^{-4}$ in the lab frame, the fastest method is $s=0$ Controlled Filon, which is $4.7\times$ faster than the fastest Hermite method ($s=2$), and over $500\times$ faster than $s=0$ Hermite. The advantage grows as the target accuracy increases; for a target error of $10^{-7}$, $s=1$ Controlled Filon is fastest, being $6.1\times$ faster than $s=2$ Hermite, and $30\times$ faster than $s=1$ Hermite. When simulating in the RWA frame to the limit of the RWA model error (which is between $10^{-2}$ and $10^{-3}$), $s=0$ Controlled Filon is roughly $2\times$ faster than $s=2$ Hermite and $10\times$ faster than $s=0$ Hermite. Hence Controlled Filon is the most efficient method tested, both at the maximum true accuracy achievable in the RWA frame and, more generally, in the lab frame, where the dynamics can be simulated to even higher true accuracy.

\section{Conclusions and Future Work} \label{sec:conclusions}
In this work, we adapted Filon quadrature into two timestepping methods for linear ODE systems with highly oscillatory solutions: the Filon method (\Cref{sec:filon-for-ODEs}) and the Controlled Filon method (\Cref{sec:controlled-filon}). Both factor the solution into slowly varying envelopes and highly oscillatory carrier waves, and have the remarkable property that their local truncation error \emph{decreases} as the carrier wave frequencies \emph{increase} (\Cref{cor:filon-timestepping,cor:controlled-filon}), so a fixed accuracy is reached with far fewer timesteps than standard integrators require. The Controlled Filon method additionally accounts for the explicit carrier-wave structure of the control pulses.

We demonstrated the methods on two problems from quantum optimal control. For the Rabi oscillator (\Cref{sec:rabi-oscillator-experiment}), both methods attained the predicted $\oCal(\Delta t^{2(s+1)})$ convergence while reaching a given accuracy with substantially larger timesteps than the Hermite method of the same order. For a CNOT gate on two transmon qudits coupled to a resonator bus (\Cref{sec:multi-qudit-gate-problem}), the Controlled Filon method was the most efficient method in every case tested, outperforming the best Hermite method by up to $6\times$ and the Hermite method of the same order by up to $500\times$, depending on the frame and target accuracy.

Considering that high-order Hermite methods are already significantly faster than commonly used low-order methods ($3$--$100\times$ on the CNOT example, see \cite{HOHO_2025}), this is a remarkable improvement. Because their accuracy improves with frequency, the Filon methods make direct simulation in the lab frame affordable, avoiding the modeling error incurred by the rotating-wave approximation.

Several extensions remain. A straightforward goal is to reduce the per-timestep cost of the Filon methods through preconditioning, adaptive GMRES tolerances, and improved initial guesses. Because each timestep takes the form
$S_\Implicit \psib_{n+1} = S_\Explicit \psib_n$, the discrete-adjoint framework of
\cite{HOHO_2025} carries over directly, so the Filon methods can be used to compute exact, efficient gradients for gradient-based pulse optimization. Implementing and evaluating this adjoint is a natural next step. Another direction is to select the ansatz frequencies $\omegab$ adaptively, by estimating the solution frequency \emph{a posteriori} during the simulation, which would help when they cannot be inferred from the drift Hamiltonian or when they change over time. Finally, because the
methods apply to any linear ODE system with oscillatory solutions, they may be useful for problems outside the field of quantum optimal control.

\section*{Source Code and Data Availability}
Implementations of the Hermite, Filon, and Controlled Filon methods, as well as scripts for reproducing the figures and tables in this work, are available at \url{https://github.com/leespen1/FilonResearch}. The precollected simulation data used to produce the figures and tables is archived at \url{https://doi.org/10.5281/zenodo.21353408}.

\section*{Declarations}
AI tools were used to assist the development of the software and the revision of the manuscript. The authors assume responsibility for all content.

\appendix 

\section{Explicit Formulas for the Filon Method for Linear ODEs}
\label[appendix]{app:filon-method-S-formulas}
In this section, we construct explicit formulas for the matrices $S^s_\Implicit$ and $S^s_\Explicit$ which appear in the timestepping rule $S^s_\Implicit\psib_{n+1} = S^s_\Explicit\psib_n$ and result from \eqref{eq:filon-timestep-rule} in the Filon method that was introduced in \Cref{cor:filon-timestepping}. For the sake of compact notation, when convenient we will not write the time dependence of variables explicitly (e.g., we write $A$ instead of $A(t)$).

First, we construct matrices $D_m$ such that $D_m \psib = \psib^{(m)}$. This can be done using the general Leibniz rule on $\dot\psib(t) = A(t)\psib(t)$:
\begin{equation} \label{eq:state-derivatives-product-rule}
    \psib^{(m+1)} = \sum_{k=0}^m \binom{m}{k}A^{(m-k)}\psib^{(k)}.
\end{equation}
Computing the $\psib^{(m+1)}$ requires $m+1$ multiplications of vectors by $A(t)$ and its derivatives, so computing all derivatives up to $\psib^{(s)}$ requires $s(s+1)/2$ matrix-vector products. Expanding $\psib^{(k)}$ on the right-hand side of \eqref{eq:state-derivatives-product-rule} gives us $D_{m+1}$. The first four instances of $D_m$ are
\begin{equation*}
    D_0 = I,\quad D_1 = A,\quad D_2 = \dot{A} + A^2,\quad D_3 = \ddot{A} + 2\dot{A}A + A\dot{A} + A^3.
\end{equation*}
We also define matrices $F_m$ such that $RF_m\psib = \boldf^{(m)}$. Using the general Leibniz rule on $\boldf(t) = R(t)\psib(t)$,
\begin{equation} \label{eq:F-matrix-definition}
    \boldf^{(m)}
    = \sum_{k=0}^m \binom{m}{k}R^{(m-k)}\psib^{(k)}
    = R F_m \psib,\quad F_m \coloneq
    \sum_{k=0}^m \binom{m}{k}(-\ii \Omega)^{m-k} D_k,
\end{equation}
where $\Omega \coloneq \diag(\omegab)$ ($R$ and $\Omega$ commute because they are both diagonal). The first three instances of $F_m$ are
\begin{alignat*}{2}
    F_0 &= D_0
        &{}={}& I \\
    F_1 &= (-\ii\Omega)D_0 + D_1 
        &{}={}& -\ii\Omega + A \\
    F_2 &= (-\ii\Omega)^2D_0 + 2(-\ii\Omega)D_1 + D_2
        &{}={}& -\Omega^2 - 2\ii\Omega A + (\dot{A} + A^2).
\end{alignat*}
We did not include the $R$ factor in the definition of $F_m$ so that we can instead ``absorb'' it into the weight matrices $B^s_{\ExplicitOrImplicit,j}(\omegab)$, defining the diagonal \emph{weight-phase} matrices
\begin{equation*}
W^s_{\ExplicitAndImplicit, j}
\coloneq\; B^s_{\ExplicitAndImplicit,j}(\omegab)R(\tExpImp)
\;=\; \left(\frac{\Delta t}{2}\right)^{j+1} \diag \left( \left\{
    e^{\pm\ii\hat\omega_k}
    \, b^{[-1,1],s}_{\ExplicitAndImplicit,j} \!\left(\hat\omega_k\right)
\right\}_{k=1}^N \right),
\end{equation*}
where $\hat\omega_k \coloneq \omega_k \Delta t/2$. Then we have
\begin{equation*}
    \dv[j]{}{t}\Big( A B^s_{\ExplicitOrImplicit, j}(\omegab) \boldf \Big)
    = \sum_{k=0}^j \binom{j}{k} A^{(j-k)}B^s_{\ExplicitOrImplicit,j}(\omegab) \boldf^{(k)}
    = \sum_{k=0}^j \binom{j}{k}A^{(j-k)}W^s_{\ExplicitOrImplicit,j}F_k\psib.
\end{equation*}
This finally allows us to write
\begin{equation*}
    S^s_\ExplicitAndImplicit = I \pm \left[\sum_{j=0}^s\sum_{k=0}^j \binom{j}{k} A^{(j-k)}W^s_{\ExplicitAndImplicit,j}F_k\right]_{\tExpImp} 
    \!\! = I \pm \left[\sum_{i=0}^s A^{(i)} \sum_{j=i}^s \binom{j}{i} W^s_{\ExplicitAndImplicit,j} F_{j-i}\right]_{\tExpImp}\!.
\end{equation*}
When computing $S^s_{\ExplicitAndImplicit}\psib$, the second ordering above allows us to apply each $A^{(i)}$ only once, to the output of the inner summation. Moreover, each $F_{m}\psib$ can be computed once and reused across the outer summation. For $s=0,1,2$, the matrices are
\begin{align*}
    S^0_\ExplicitAndImplicit &= I \pm \big[ A W^0_{\ExplicitAndImplicit,0} \big]_{\tExpImp}, \\
    S^1_\ExplicitAndImplicit &= I \pm \big[ A W^1_{\ExplicitAndImplicit,0}
        + \dot{A} W^1_{\ExplicitAndImplicit,1}
        + A W^1_{\ExplicitAndImplicit,1}(A - \ii\Omega) \big]_{\tExpImp}, \\
    S^2_\ExplicitAndImplicit &= I \pm \big[ A W^2_{\ExplicitAndImplicit,0}
        + \dot{A} W^2_{\ExplicitAndImplicit,1}
        + \ddot A W^2_{\ExplicitAndImplicit,2}
        + A W^2_{\ExplicitAndImplicit,1}(A-\ii\Omega) \\
        &\quad\quad + 2\dot{A} W^2_{\ExplicitAndImplicit,2}(-\ii\Omega+A)
        + A W^2_{\ExplicitAndImplicit,2}\left( -\Omega^2 - 2\ii \Omega A + \dot{A} + A^2\right) \big]_{\tExpImp}.
\end{align*}
When we choose the ansatz $\omegab = [0,\dots,0]^T$, the Filon method reduces to the Hermite method. In this case, $\Omega = 0$ and each $W^s_{\ExplicitAndImplicit,j}$ can be replaced with a scalar weight, reducing the number of diagonal matrix-vector multiplications performed.

Naively, when $A(t) = \sum_{k=1}^{\nHam} c_k(t)A_k$ as in \eqref{eq:scalar-controlled-hamiltonian-A}, the number of matrix-vector products is multiplied by a factor of $\nHam$. However, we can make further optimizations that \emph{reduce} the scaling in $s$, from $\frac{(s+1)(s+2)}{2}\nHam$ to $(s+1)\nHam$. Factoring the constant $A_k$ matrices out of $A^{(j-k)}$, we get
\begin{equation*}
    S^s_\ExplicitAndImplicit
    = I \pm \left[ \sum_{k=1}^{\nHam} A_k \sum_{j=0}^s \sum_{m=0}^j
        \binom{j}{m} \ctrlFunc_k^{(j-m)} W^s_{\ExplicitAndImplicit,j} F_m \right]_{\tExpImp},
\end{equation*}
which reduces the number of matrix-vector products in the outermost summation from $(s+1)\nHam$ to $\nHam$. Applying the same factorization to \eqref{eq:state-derivatives-product-rule} reduces the number of matrix-vector products required to compute $\psib^{(0)},\ldots,\psib^{(s)}$ from $\frac{s(s+1)}{2}\nHam$ to $s\nHam$.

Swapping the order of the $j$ and $m$ summations and gathering the weight-phase matrices $W^s_{\ExplicitAndImplicit,j}$ into the time-dependent diagonal matrix $G^s_{\ExplicitAndImplicit,k,m}$ (which can be reused across GMRES iterations) yields the efficient Filon form
\begin{equation} \label{eq:efficient-filon-S-formula}
    S^s_\ExplicitAndImplicit = I \pm \sum_{k=1}^{\nHam} A_k \sum_{m=0}^s
        \left[ G^s_{\ExplicitAndImplicit,k,m} F_m \right]_{\tExpImp}\!\!, \!\quad
    G^s_{\ExplicitAndImplicit,k,m} \coloneq \sum_{j=m}^s \binom{j}{m} \ctrlFunc_k^{(j-m)} W^s_{\ExplicitAndImplicit,j}.
\end{equation}

\section{Explicit Formulas for the Controlled Filon Method}
\label[appendix]{app:controlled-filon-method-S-formulas}
In \Cref{app:filon-method-S-formulas}, we gave explicit formulas for the matrices $S_\Implicit^s$ and $S_\Explicit^s$ for the timestepping rule $S_\Implicit^s\psib_{n+1} = S_\Explicit^s \psib_n$ for the Filon method introduced in \Cref{cor:filon-timestepping}. In this section, we give analogous matrices for the timestepping rule for controlled quantum systems, introduced in \Cref{cor:controlled-filon}. 

First, we expand the derivatives in the summand of \eqref{eq:controlled-filon-timestep-system-of-equations}:
\begin{equation*}
    B^s_{\ExplicitOrImplicit,j}(\tilde{\omegab}_{k,l}) ( \tilde{\ctrlFunc}_{k,l}\boldf )^{(j)}
\!=\! B^s_{\ExplicitOrImplicit,j}(\tilde{\omegab}_{k,l}) \!\!\sum_{m=0}^j \! \binom{j}{m} \tilde{\ctrlFunc}_{k,l}^{(j-m)}\boldf^{(m)}
\!=\! \sum_{m=0}^j \! \binom{j}{m}\tilde{\ctrlFunc}_{k,l}^{(j-m)}B^s_{\ExplicitOrImplicit,j}(\tilde{\omegab}_{k,l}) \boldf^{(m)}.
\end{equation*}
Next, we wish to rewrite $B^s_{\ExplicitOrImplicit,j}(\tilde{\omegab}_{k,l}) \boldf^{(m)}$ in terms of $\psib$.  Recall from \eqref{eq:diagonal-weight-matrices} that
\begin{equation*}
    B^s_{\ExplicitOrImplicit,j}(\tilde{\omegab}_{k,l}) \coloneq
        \left(\frac{\Delta t}{2}\right)^{j+1}
        \diag \left( \left\{
            e^{\ii (\omega_i + \nu_{k,l}) \overline{t}_n}b^{[-1,1],s}_{\ExplicitOrImplicit,j}\left((\omega_i+\nu_{k,l}) \frac{\Delta t}{2}\right)
        \right\}^{N}_{i=1}\right),
\end{equation*}
where $\overline{t}_n \coloneq (t_n+t_{n+1})/2$ is the midpoint of the timestep interval. Then we have
\begin{equation*}
    B^s_{\Implicit,j}(\tilde{\omegab}_{k,l})R(t_{n+1}) = e^{\ii\nu_{k,l}\overline{t}_n}
    \left(\frac{\Delta t}{2}\right)^{j+1} 
    \!\!\!\! \diag \! \left( \! \left\{
        e^{-\ii\hat{\omega}_i }b^{[-1,1],s}_{\Implicit,j}\left((\omega_i + \nu_{k,l})\frac{\Delta t}{2}\right)
    \right\}^N_{i=1} \right).
\end{equation*}

Similar to \Cref{app:filon-method-S-formulas}, we define diagonal weight-phase matrices $W^s_{\ExplicitAndImplicit,j,k,l}$, where this time we use additional indices to account for the modification of the frequencies $\omegab$ by the carrier frequency $\nu_{k,l}$ of each carrier:
\begin{equation*}
    W^s_{\ExplicitAndImplicit,j,k,l} \coloneq
        \left(\frac{\Delta t}{2}\right)^{j+1}
        \diag \left( \left\{
            e^{\pm\ii \hat\omega_i } b^{[-1,1],s}_{\ExplicitAndImplicit,j}\left((\omega_i+\nu_{k,l}) \frac{\Delta t}{2}\right)
        \right\}^{N}_{i=1}\right).
\end{equation*}
We note that the weight-phase matrices are time-independent. Using the above definitions, and the matrices $F_m$ of \eqref{eq:F-matrix-definition}, we have
\begin{equation*}
   e^{\ii \nu_{k,l}\overline{t}_n} W^s_{\ExplicitOrImplicit,j,k,l}F_m \psib = B^s_{\ExplicitOrImplicit,j}(\tilde{\omegab}_{k,l})\boldf^{(m)}.
\end{equation*}
This allows us to write $S^s_\Implicit$ and $S^s_\Explicit$ for the Controlled Filon method as
\begin{equation*}
    S^s_\ExplicitAndImplicit = I \pm \left[
    \sum_{k=1}^{\nHam} A_k \sum_{l=1}^{N_{\mathrm{F},k}} \phi_{\ExplicitAndImplicit,k,l} \sum_{j=0}^s\sum_{m=0}^j
        \binom{j}{m}\tilde{\ctrlFunc}_{k,l}^{(j-m)}W^s_{\ExplicitAndImplicit,j,k,l}F_m
    \right]_{\tExpImp},
\end{equation*}
where the time-dependent phase factor $\phi_{\ExplicitAndImplicit,k,l}$ is defined by
\begin{equation*}
    \phi_{\ExplicitAndImplicit, k,l}(t) \coloneq  e^{\ii \nu_{k,l}(t \pm \frac{\Delta t}{2})}.
\end{equation*}
We have factored the matrix $A_k$ out of the sum over carriers of the same control. Exchanging the order of the $j$ and $m$ summations and gathering every carrier of a given control into a single time-dependent diagonal matrix,
\begin{equation} \label{eq:controlled-filon-G-definition}
    G^s_{\ExplicitAndImplicit,k,m} \coloneq \sum_{l=1}^{N_{\mathrm{F},k}} \phi_{\ExplicitAndImplicit,k,l} \sum_{j=m}^s \binom{j}{m} \tilde{\ctrlFunc}_{k,l}^{(j-m)} W^s_{\ExplicitAndImplicit,j,k,l},
\end{equation}
yields the more efficient Controlled Filon form
\begin{equation} \label{eq:explicit-controlled-filon-efficient}
    S^s_\ExplicitAndImplicit = I \pm \sum_{k=1}^{\nHam} A_k \sum_{m=0}^s \left[ G^s_{\ExplicitAndImplicit,k,m} F_m \right]_{\tExpImp}.
\end{equation}
As with the Filon method, each $F_m\psib$ can be computed once and reused. Outside of computing $F_m\psib$, each 
matrix $A_k$ is applied only once, in the outer summation. Therefore, the
number of non-diagonal matrix multiplications performed does not grow with the
number of carriers $N_{\mathrm{F},k}$.

\bibliographystyle{siamunsrt}
\bibliography{bibliography}
\end{document}